\renewenvironment{proof}
{{\bf Proof:}}{\hspace*{\fill}$\Box$\par\vspace{2mm}}
\begin{document}
\title{On Touching Triangle Graphs}

\author{
Emden R.~Gansner\inst{1} \and Yifan Hu\inst{1} \and Stephen G.~Kobourov\inst{2}
}

\institute{
    AT\&T Labs - Research, Florham Park, NJ\\
    {\tt \{erg, yifanhu\}@research.att.com}
\and
    University of Arizona, Tucson, AZ\\
    {\tt kobourov@cs.arizona.edu}
}

\maketitle
\begin{abstract}
In this paper, we consider the problem of representing graphs by 
triangles whose sides touch. As a simple necessary condition, we
show that pairs of vertices must have a small common neighborhood. 
On the positive side, we present linear time algorithms for creating 
touching triangle representations for outerplanar graphs, square grid 
graphs, and hexagonal grid graphs. 
We note that this class of graphs is not
closed under minors, making characterization difficult. 
However, we present a complete characterization of the subclass of 
biconnected graphs that can be represented as triangulations of some polygon. 
\end{abstract}

\section{Introduction} Planar graphs are a widely studied class of graphs
that includes naturally occurring subclasses such as trees and outerplanar
graphs. Typically planar graphs are drawn using the node-link model, where
vertices are represented by a point and edges are represented by
line segments. Alternative representations, such as contact
circles~\cite{Brightwell:1993:RPG} and contact
triangles~\cite{contact_triangles} have also been explored. In these
representations, a vertex is a circle or triangle, and an edge is represented by
pairwise contact at a common point.

In this paper, we explore the case where vertices are polygons, with an edge
whenever the sides of two polygons touch.
Specifically, given a planar graph $G=(V,E)$,
we would like to find a set of polygons $R$ such that:
\begin{enumerate} 
\item there is bijection between $V$ and $R$; 
\item two polygons touch non-trivially if and only if the corresponding vertices are adjacent in $G$; 
\item and each polygon is convex.
\end{enumerate}

Note that, unlike the case of contact circle and contact triangle representations,
two polygons that share a common point are not considered adjacent. It
is easy to see that all planar graphs have representations meeting conditions
1 and 2 above, as pointed out by de Fraysseix {\em et al.}~\cite{FMR04}.
Starting with a straight-line planar drawing of $G$, a polygon for each vertex
can be defined by taking the midpoints of all adjacent edges and the centers
of all neighboring faces. The {\em complexity}, i.e., 
number of its sides, of the resulting polygons can be as
high as $|V|-1$, as it is proportional to the degree of the corresponding
vertex. Moreover, the polygons would not necessarily be convex. 

A theorem of Thomassen \cite{rl:t84} implies that all planar graphs can be represented
using convex hexagons. (This also follows from results by Kant \cite{kant-92} and
de Fraysseix {\em et al.}~\cite{FMR04}.) Gansner {\em et al.}~\cite{ghkk09} have
shown that six sides are also necessary.
This leads us to consider which planar graphs can be represented by polygons with
fewer than six sides.

This paper presents some initial results for the case of touching triangles.
We assume we are dealing with connected planar graphs $G=(V,E)$. We let 
$TTG$ denote the class of graphs that have a touching triangle representation.
In Section~\ref{sec:outerplanar}, we show that all outerplanar graphs are in $TTG$.
Similarly, we show in Section~\ref{sec:grids} that all subgraphs of
a square or hexagonal grid are in $TTG$. All of these representations can be computed
in linear time. Section~\ref{sec:triangulations} characterizes the special case of graphs
arising from triangulations of simple, hole-less polygons. Finally, 
in Section~\ref{sec:necessary}, we show that, for graphs in $TTG$, 
pairs of vertices have very limited common neighborhoods. 
This allows us to identify concrete examples of graphs not in $TTG$.

\subsection{Related Work}

Results on representing planar graphs as ``contact systems'' can be dated
back to Koebe's 1936 theorem~\cite{Koebe36} which states that any planar
graph can be represented as a contact graph of
disks in the plane. When the regions are further restricted to rectangles,
not all planar graph can be represented. Rahman {\em et al.}~\cite{Rahman04}
describe a linear time algorithm for constructing rectangular contact graphs,
if one exists. Buchsbaum {\em et al.}~\cite{Buchsbaum08} provide a
characterization of the class of graphs that admit rectangular contact graph
representation. The version of the problem where it is further required that
there are no holes in the rectangular contact graph representation is known
as the rectangular dual problem. He~\cite{He93} describes a linear time
algorithm for constructing a rectangular dual of a planar graph, if one
exists. Kant's linear time algorithm for drawing degree-3 planar graphs on a
hexagonal grid~\cite{kant-92} can be used to obtain hexagonal drawings for
planar graphs.

In VLSI floor-planning it is often required to partition a rectangle into
rectilinear regions so that non-trivial region adjacencies correspond to a
given planar graph. It is natural to try to minimize the complexities of the
resulting regions and the best known results are due to He~\cite{He99} and
Liao {\em et al.}~\cite{Liao03} who show that regions need not have more than
8 sides. Both of these algorithms run in $O(n)$ time and produce layouts on
an integer grid of size $O(n) \times O(n)$, where $n$ is the number of
vertices.

Rectilinear cartograms can be defined as rectilinear contact graphs for
vertex weighted planar graphs, where the area of a rectilinear region must be
proportional to the weight of its corresponding node. Even with this extra
condition, de Berg {\em et al.}~\cite{deBerg07} show that rectilinear
cartograms with constant region complexity can be constructed in $O(n \log
n)$ time. Specifically, a rectilinear cartogram with region complexity 40 can
always be found.

\section{Outerplanar Graphs} \label{sec:outerplanar}

In this section, we show that any outerplanar graph can be represented by a
set of touching triangles, that is, outerplanar graphs belong to the class
$TTG$. Here we assume that we are given an outerplanar graph $G=(V,E)$ and
the goal is to represent $G$ as a set of touching triangles. We describe a
linear time algorithm based on inserting the vertices of $G$ is an
easy-to-compute ``peeling'' order.
Figure~\ref{fig-outerplanar} illustrates the algorithm with an
example.

\begin{figure}[th] 
\begin{center}
\vspace{-1cm}
\includegraphics[width=12cm]{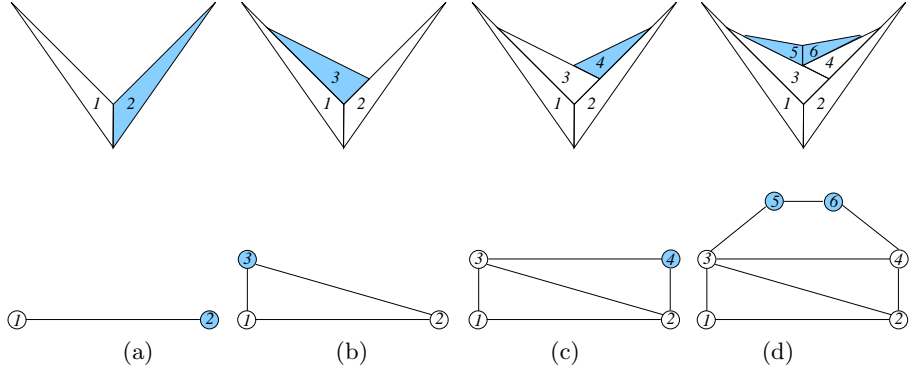}\\ 
(a) \hspace{2.2cm} (b) \hspace{2.2cm} (c) \hspace{2.2cm} (d) \caption{\small\sf
Incremental construction of the $TTG$ representation of an outerplanar graph on
6 vertices. The shaded vertices on the bottom row and shaded regions on the
top row are the ones processed at the current step. Note that the upper
envelope on the top row forms a concave chain at all times.
\label{fig-outerplanar} 
} 
\end{center} 
\end{figure}

\subsection{Algorithm Overview}

\begin{enumerate} 
\item Compute an outerplanar embedding of $G$. 
\item Compute a reverse ``peeling'' order of the vertices of $G$. 
\item Insert region(s) corresponding to the current set of vertices 
in the peeling order, while maintaining a concave upper envelope. 
\end{enumerate} 

We now look at each step in more detail.
The first step of the algorithm is to compute an outerplanar embedding of the
graph, that is, an embedding in which all the vertices are on the outer face.
For a given planar graph $G=(V,E)$, this can be easily done in linear
time as follows. Let $w$ be a new vertex and let $G'=(V',E')$, where
$V'=V\cup\{w\}$ and $E'=E\cup\{(v,w)$ for all $v\in V\}$. Note that 
$G'$ is planar: if it contained a subgraph homeomorphic to $K_5$ or 
$K_{3,3}$, then $G$ would contain a subgraph homeomorphic to $K_4$ or
$K_{3,2}$, which would imply that G was not outerplanar to begin with 
as these are forbidden graphs for outerplanar 
graphs (Theorem 11.10, \cite{h-gt-72}).
We can then
compute a planar embedding for $G'$ with $w$ on the outer face. Removing
$w$ and all its edges yields the desired outerplanar embedding.

The second step of the algorithm is to compute a reverse ``peeling'' order of
the vertices of $G$. Such an order is defined by peeling off one face at a
time and keeping track of the set of removed vertices. Note that, as $G$ is
outerplanar, each such set is a path with one or more vertices and only its
endpoints are connected to the rest of the graph. Moreover, as the dual of an
outerplanar graph is a tree, any pair of adjacent faces shares exactly one
edge. As a result of this step in the algorithm, all the vertices of $G$ are
partitioned into disjoint sets with increasing labels. Since the order is
reversed, the last face peeled is the one with vertices $v_1, v_2, v_3$.

The third step of the algorithm is to create the touching triangles
representation of $G$, by processing the graph using the peeling order from
the second step. We begin by placing the vertices in the last peeled face.
Suppose the last peeled face has exactly 3 vertices, $v_1, v_2, v_3$. Without
loss of generality, let the edge $(v_2,v_3)$ separate this face from the rest
of the graph. We create two triangles corresponding to $v_1$ and $v_2$ and
place these triangles so that they have one adjacent side and two other sides
of the triangles create a concave upper envelope; see
Fig.~\ref{fig-outerplanar}(a). The third vertex, $v_3$, corresponds to a
triangle that can be placed in the created concavity so that it has one side
touching the triangle that corresponds to $v_1$ and another side touching the
triangle that corresponds to $v_2$. The size of the triangle is computed so
that the upper envelope is still concave and contains a side of each of the
three triangles; see Fig.~\ref{fig-outerplanar}(b). Taking the midpoints of
the adjacent sides of the already placed triangles for $v_1$ and $v_2$ would
do.

In general, when processing the current set of one or more vertices in the
peeling order, they are of the form $v_k, v_{k+1}, \dots, v_{k+j}$, $j\geq
0$. These vertices form a path in $G$ and $v_{k+1}, \dots, v_{k+j-1}$ each
have degree 2 in the current graph, that is, they are not connected to any
other vertices of the graph processed so far, due to outerplanarity.
Furthermore, $v_k$ and $v_{k+j}$ are connected to two other vertices in $G$
which have already been processed; call them $v_l$ and $v_r$. Due 
to outerplanarity, $v_l$ and $v_r$ correspond to two adjacent triangles in the
concave upper envelope. If $j=0$, we just need to create one triangle that
corresponds to the single current vertex $v_k$ and place it so that it is
adjacent to the already processed triangles corresponding to $v_l$ and $v_r$,
and ensuring that the new triangle preserves the concavity of the upper
envelope. Once again, taking the midpoints of the adjacent sides of the
already placed triangles for $v_l$ and $v_r$ suffices; see
Fig.~\ref{fig-outerplanar}(c).

If $j>0$, then we represent the $j+1$ current vertices as a ``fan'' of
triangles that have adjacent sides and are also adjacent to the two already
placed triangles that correspond to $v_l$ and $v_r$. Finally, we ensure that
the upper envelope of the resulting group of triangles forms a concave
envelope; see Figure~\ref{fig-outerplanar}(d). Note that this idea can be
applied to the case when the first peeled face is made of more than 3
vertices.

The algorithm maintains the following two invariants:
\begin{enumerate} 
\item the upper envelope of the touching-triangles representation is concave. 
\item all vertices that might still have incoming
edges in a future stage of the algorithm have an exposed side in their
corresponding triangle on the upper envelope. 
\end{enumerate}

The first step of this algorithm can be done in linear time as it is a
slight modification of a standard planar embedding algorithm such as that by
Hopcroft and Tarjan~\cite{ht-ept-74}. The second step can also be done in
linear time as computing the ``peeling ordering'' requires constant time per
face, given the embedding of the graph from the previous step. In the third
step, we record the three edges of each triangle corresponding to each
processed vertex. Inserting a new chain of vertices involves finding the 
midpoint of the exposed edges, and forming the ``fan'' of new triangles, all
tasks which require constant time per vertex and add up to linear overall
time. Thus, we have the following theorem:

\begin{theorem} 
A touching triangles representation can be computed in linear
time for any outerplanar graph. 
\end{theorem}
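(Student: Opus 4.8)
The plan is to establish two things: that the three-step algorithm above actually outputs a valid touching triangle representation of $G$, and that it runs in linear time; the latter has essentially been argued already in the step-by-step discussion, so the substance of the proof is correctness. I would prove correctness by induction on the stages of the incremental insertion in Step~3, taking the two stated invariants as the induction hypothesis. Thus, after the vertices through some stage have been placed, I assume (i) the upper envelope of the triangles drawn so far is a concave chain, and (ii) every placed vertex that is still adjacent in $G$ to an unprocessed vertex contributes an exposed side to that envelope. The goal of each inductive step is to preserve both invariants while realizing \emph{exactly} the edges of $G$ incident to the newly inserted vertices.

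For the base case I would verify directly that the placement of the triangles for the last peeled face (the $v_1,v_2,v_3$ configuration, or its analogue for a larger first face) satisfies (i) and (ii). For the inductive step, a chain $v_k,\dots,v_{k+j}$ is inserted; its endpoints attach to already-placed vertices $v_l$ and $v_r$, which by invariant (ii) expose sides on the envelope and which, by outerplanarity, lie on consecutive such sides bounding a single concave pocket. I would then check three claims: (a) each new triangle in the fan touches precisely its neighbors in the chain, and the two extreme triangles additionally touch $v_l$ and $v_r$, which supplies the ``if'' direction of the adjacency condition; (b) the updated envelope is again concave and exposes a side for every newly placed vertex that may still receive an edge, re-establishing (i) and (ii); and (c) the insertion creates no unintended contacts, the ``only if'' direction.

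I expect claim (c) to be the main obstacle. The concavity invariant is exactly what makes it tractable: because the pocket between the exposed sides of $v_l$ and $v_r$ is bounded by a concave chain, the fan can be drawn small enough (e.g.\ via the midpoint construction suggested in the text) that its triangles meet only $v_l$, $v_r$, and one another along genuine segments, while any incidence with the remaining triangles of the envelope is at most a single point and hence does not count as an adjacency. Here I would also lean on outerplanarity, through the structure of the peeling order, to guarantee that the interior chain vertices $v_{k+1},\dots,v_{k+j-1}$ have degree $2$ in the processed graph, so that a fan---with its path-like adjacency pattern---is the correct gadget and no further contacts are needed.

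Finally, I would collect the running-time bound from the three steps as already described: a modified Hopcroft--Tarjan embedding in linear time, the peeling order computed in constant time per face given that embedding, and constant work per vertex during incremental placement (locating the exposed edges and forming the fan). Summing over all vertices and faces yields an $O(|V|)$ bound, completing the proof.
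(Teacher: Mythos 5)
Your proposal follows essentially the same route as the paper: the same peeling order, the same concave-envelope and exposed-side invariants, the same midpoint/fan construction, and the same linear-time accounting (Hopcroft--Tarjan embedding, constant time per face for the peeling order, constant work per inserted vertex). If anything, your inductive organization and your explicit identification of the ``no unintended contacts'' claim as the delicate point is more careful than the paper's informal presentation, which leaves that step implicit.
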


\section{Grid Graphs} 
\label{sec:grids} 
In this section, we show that any subgraph of a
square or hexagonal grid graph can be represented by a set of touching
triangles. We describe a linear time
algorithm based on inserting the vertices of the graph in an outward fashion
starting from an interior square/hexagon. 
We illustrate the algorithm with examples in Figure~\ref{fig-grids}.

\begin{figure}[th] 
\begin{center}
\vspace{-1cm}
\includegraphics[width=5cm]{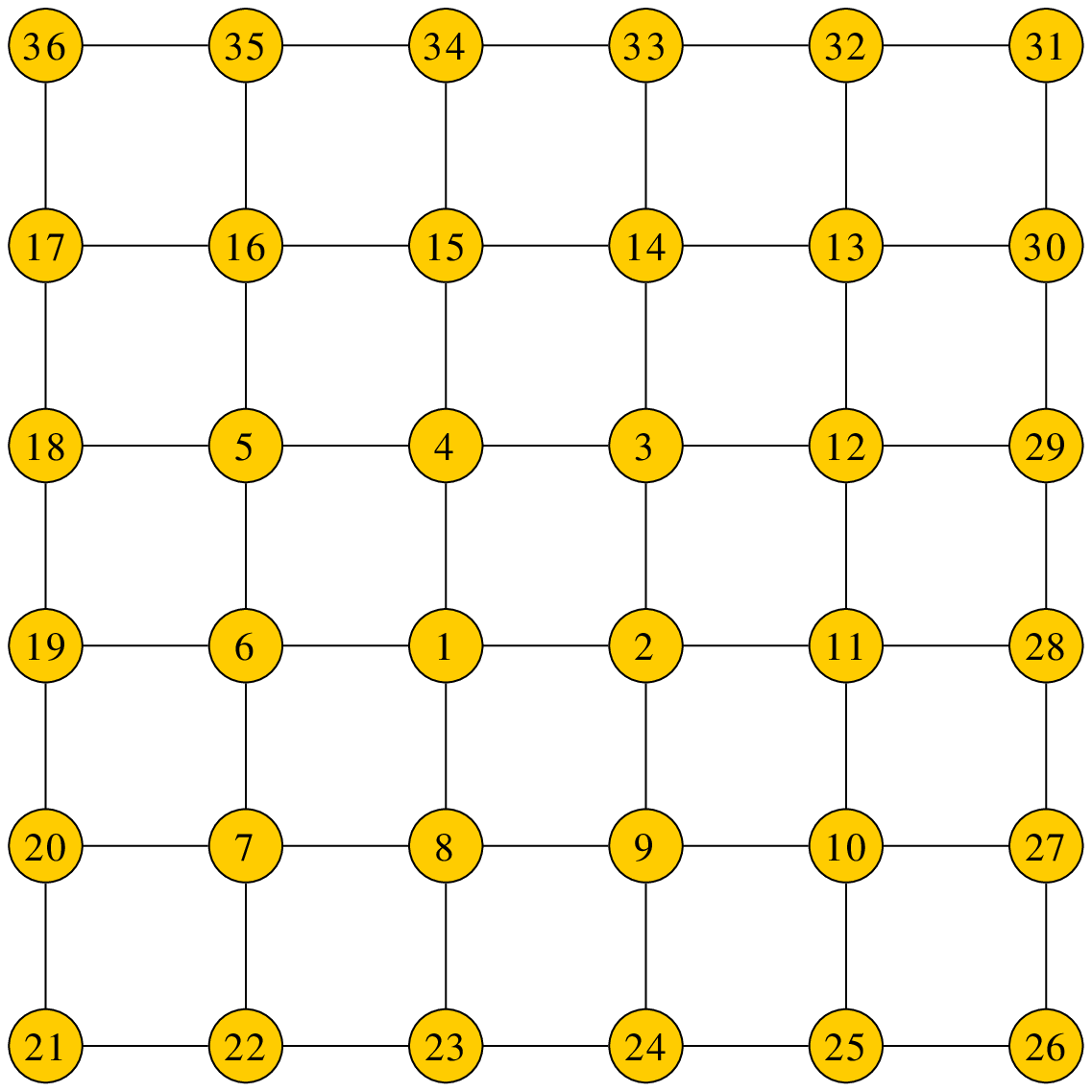}\hspace{1cm}
\includegraphics[width=6cm]{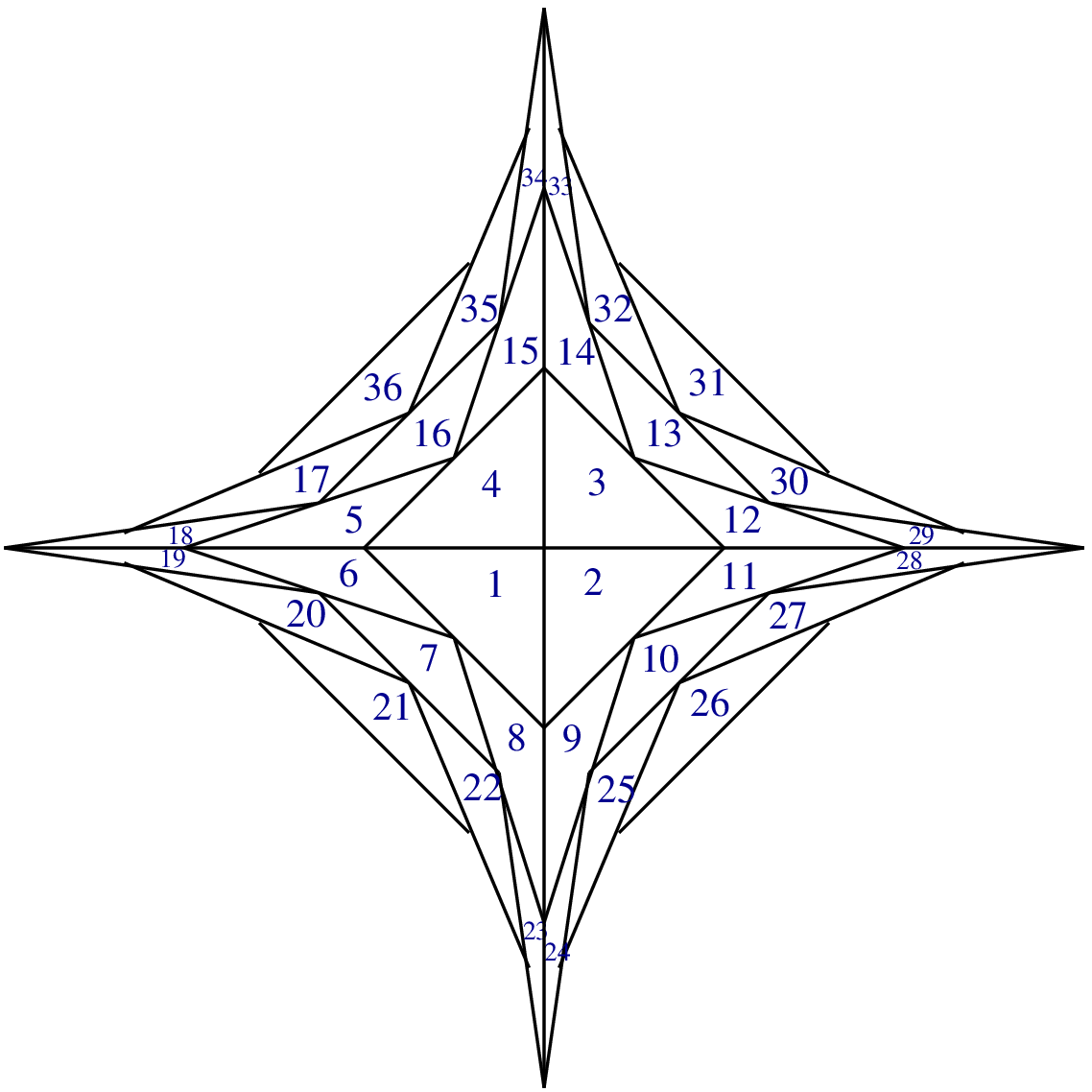}
\includegraphics[width=4.8cm]{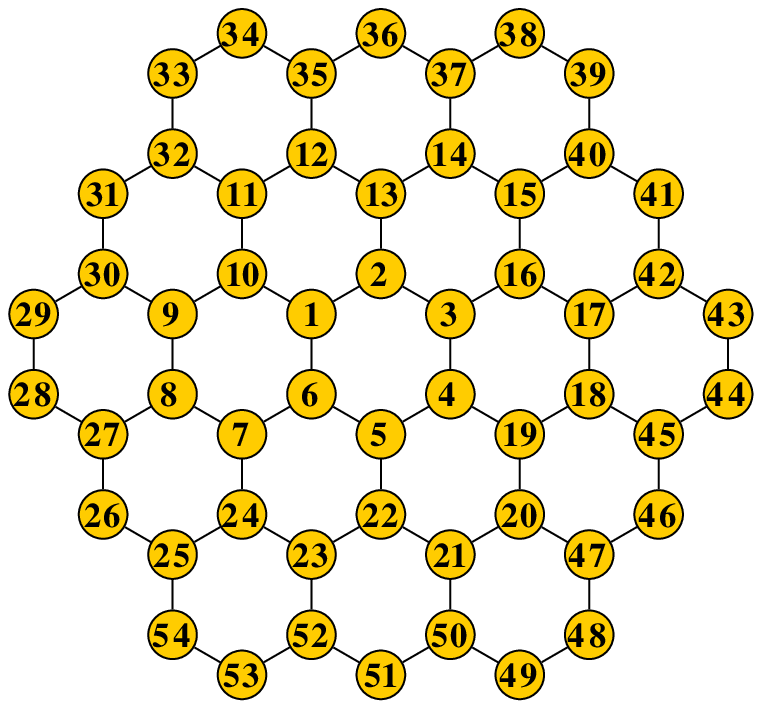}\hspace{1cm}
\includegraphics[width=4.9cm]{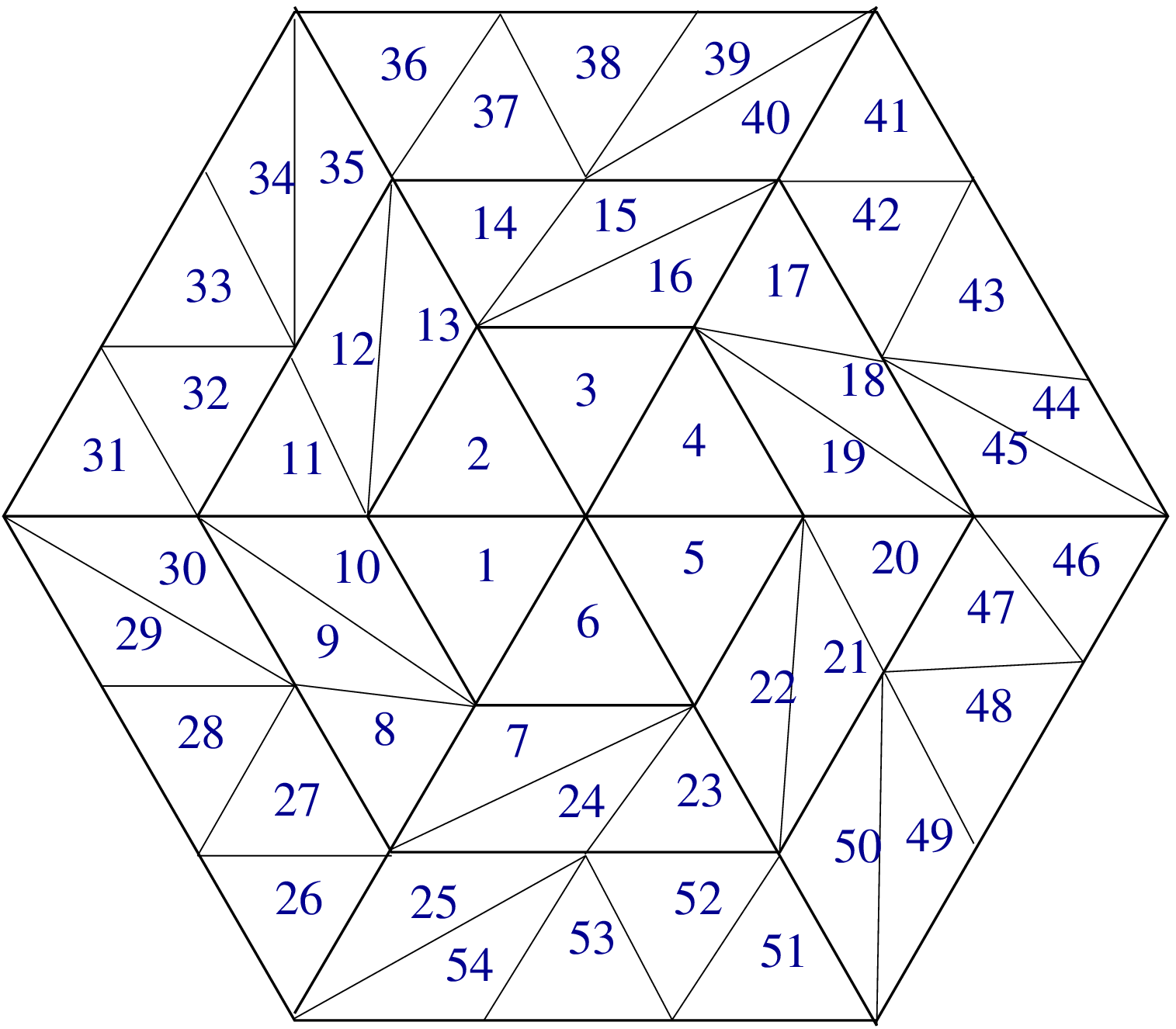} 
\caption{\small\sf Grid
graphs (left) as touching triangles (right). The Hamiltonian path that visits
all the vertices in the spiral order is given by the labels of the vertices.}
\label{fig-grids} 
\end{center} 
\end{figure}

\subsection{Algorithm Overview}
We first consider $TTG$ representations for grid graphs.
\begin{enumerate} 
\item Compute a planar embedding of $G$. 
\item Compute a ``spiral'' order of the vertices of $G$. 
\item Insert region(s), corresponding to a vertex or a path of 
vertices in the spiral order, while
maintaining a concave upper envelope in each quadrant (in the case of square
grid), or by carving out triangles out of trapezoids that correspond to the
current spiral segment (in the case of hexagonal grid). 
\end{enumerate} 

The first step of the algorithm is to compute a planar embedding of the
graph, which can be done in linear time~\cite{ht-ept-74}. Next we compute a
``spiral'' order of the vertices. Such an order is defined by a Hamiltonian
path which starts with the innermost face and visits all the vertices as
shown in Fig.~\ref{fig-grids}. Note that this is well defined for symmetric
grid graphs but can be modified to handle asymmetric grid graphs and
subgraphs of grid graphs.

In the case of square grids, the plane is partitioned into four quadrants and
in each quadrant the spiral order introduces vertices in paths of increasing
lengths $(1, 3, 5, \dots)$. In general these paths can be introduced
recursively, provided that the upper envelope of the quadrant remains
concave. The insertion of regions is similar to the process described for
outerplanar graphs above.

In the case of hexagonal grids the plane is partitioned into six sectors and
in each sector the spiral order introduces vertices in paths of increasing
lengths $(1,3,5, \dots)$. In general, these paths can be introduced directly
by adding an adjacent trapezoidal region and carving it into triangles. 

The above algorithms show how to construct a $TTG$ representation for
any square or hexagonal grid graph. To get a $TTG$ representation
for any subgraph, one need only remove the triangles corresponding to
vertices unused in the subgraph, and adjust the remaining triangles to
remove any contacts corresponding to unused edges.
Thus, we have the following theorem:

\begin{theorem} 
A touching triangles representation can be computed in linear
time for any subgraph of a square or hexagonal grid graph. 
\end{theorem}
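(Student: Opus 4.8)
The plan is to treat the square-grid and hexagonal-grid cases separately, proving correctness of the full-grid construction first and then reducing the subgraph case to it. In both cases I would follow the incremental scheme already used for outerplanar graphs: fix a planar embedding, compute the spiral Hamiltonian order, and insert the triangles one spiral segment at a time while maintaining the two invariants from Section~\ref{sec:outerplanar}, namely that the boundary facing the not-yet-placed vertices is concave (per quadrant for the square grid, per sector for the hexagonal grid) and that every vertex that may still acquire an edge keeps an exposed side. The property I must verify at each insertion is exactly the defining condition of a $TTG$ representation: two triangles share a nontrivial segment precisely when the corresponding grid vertices are adjacent, with no spurious side-to-side contacts created.

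For the square grid I would argue by induction on the spiral layers. Within a quadrant the $k$-th inserted path has length $2k-1$ and must become adjacent to the length-$(2k-3)$ path placed in the previous layer, while each of its vertices also picks up its two intra-path neighbors and, at the turning vertex, a cross-quadrant neighbor. Since the exposed boundary of the previous layer is a concave chain, I can seat each triangle of the new fan against exactly the edges it should touch, taking midpoints of exposed sides as in the outerplanar construction, and size it so that the new exposed boundary is again concave. Concavity is what guarantees that each new triangle touches only its two path-neighbors and the single previous-layer triangle directly beneath it, so only the intended adjacencies arise. For the hexagonal grid the mechanism differs: each spiral segment is realized by attaching one trapezoid flush against the exposed boundary of the previous segment and carving it by straight cuts into its $2k-1$ triangles. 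Here I would check that the cuts can be chosen so that consecutive carved triangles share a side (the intra-segment path edges), each carved triangle shares a side with the trapezoid edge below it (the inter-segment edges), and the degree-three structure of the hexagonal grid is matched.

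The main obstacle I anticipate is the bookkeeping of ``exactly the right'' contacts at the quadrant and sector seams, where a vertex in one sector acquires a neighbor from the adjacent sector and where two concave chains (or two trapezoids) meet. At these seams one must confirm both that the required cross-sector edge is realized and that no unintended side-to-side overlap is introduced. I expect to dispose of this by a fixed local gadget at each turn of the spiral and an explicit check that the gadget creates only the one intended contact; because the grid has bounded degree, only finitely many local configurations need to be verified.

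Finally, for an arbitrary subgraph $H$ of a grid, I would build the representation of the smallest full grid containing $H$ and then delete the triangles of the vertices absent from $H$; for each grid edge not present in $H$, I would perturb one of its two incident triangles inward along their shared side so that the contact shrinks to a single point, which by the paper's convention is not an adjacency, or vanishes entirely. Each such deletion or perturbation is local and, since the grid has bounded degree, costs constant work per vertex and per edge. Combining the linear-time planar embedding~\cite{ht-ept-74}, the linear-time spiral ordering, the constant-amortized-time insertions, and the constant-per-element adjustment step then yields the claimed linear running time.
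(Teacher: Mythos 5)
Your proposal follows essentially the same approach as the paper: spiral Hamiltonian ordering, per-quadrant concave envelopes for the square grid, trapezoids carved into triangles for the hexagonal grid, and deletion/perturbation of triangles to handle subgraphs, with the same linear-time accounting. In fact you spell out the seam-handling and adjacency-verification details more explicitly than the paper's own (quite terse) argument does.
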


\section{Triangulations}
\label{sec:triangulations}

If we require each face in a triangle representation to have
exactly three vertices, i.e., the vertex of one triangle cannot
touch the side of another, we get the 
special case of $TTG$s we call {\em triangulation graphs}.
These representations clearly correspond to creating a triangular
mesh \cite{bern,cg-book-00}, allowing Steiner points, 
within the interior of a polygon. For example, the representation in
the bottom right of Fig.~\ref{fig-grids} is a triangulation graph and the
representation in the top right of Fig.~\ref{fig-grids} is not.

It is easy to see that triangulation graphs form a strict subset of
$TTG$s. For example, $K_4$ is a $TTG$ but not a triangulation graph.
It is also immediate that a triangulation graph has maximum degree 3,
because by the definition of triangulation graphs, the vertex of 
one triangle cannot touch the side of another.

\begin{lemma}
\label{prop:deg2}
If G is a triangulation graph with no nodes of degree 1, 
G has at least 3 nodes of degree 2.
\end{lemma}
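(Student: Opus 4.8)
The plan is to reinterpret the statement combinatorially through the underlying triangular mesh. A triangulation graph $G$ is the adjacency graph of the triangles of a triangulation (with Steiner points allowed in the interior) of a simple, hole-less polygon $P$: each triangle is a node of $G$, and two nodes are adjacent exactly when their triangles share a full side. Since the mesh is edge-to-edge, two distinct triangles meet in at most one side, so the degree of a node in $G$ equals the number of sides of its triangle that are shared with a neighbor, which is $3$ minus the number of its sides lying on the boundary $\partial P$. Hence a node has degree $3$ iff its triangle is interior (no boundary side), degree $2$ iff its triangle has exactly one boundary side, degree $1$ iff it has two, and degree $0$ iff all three sides are boundary sides (so $P$ is a single triangle). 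The goal becomes: show that at least three triangles have exactly one boundary side.

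First I would dispose of the trivial case. Assuming $G$ has at least two nodes, connectedness of the mesh (its union is a disk) forces every triangle to share at least one side, so there are no degree-$0$ nodes; together with the hypothesis of no degree-$1$ nodes, every triangle then has either zero or exactly one boundary side. The key step is then a short injectivity/counting argument on boundary edges. Let $b$ be the number of mesh edges lying on $\partial P$. Each such edge belongs to exactly one triangle, and conversely, since no triangle carries two or more boundary sides, distinct boundary edges belong to distinct triangles; the triangles arising this way are precisely those with one boundary side, i.e. the degree-$2$ nodes. Therefore the number of degree-$2$ nodes equals $b$.

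Finally I would invoke the geometry: $\partial P$ is a simple closed polygonal curve bounding a nonempty planar region, so it is made of at least three straight edges, giving $b \ge 3$ and hence at least three degree-$2$ nodes. This bound is tight, as witnessed by a single triangle subdivided by one interior Steiner point, which produces exactly three degree-$2$ triangles forming a $3$-cycle. The main obstacle is not any hard estimate but rather setting up the dictionary cleanly: I must justify that a node's degree equals "$3$ minus its number of boundary sides" (using the edge-to-edge property so that each interior side contributes exactly one neighbor, with no repeated adjacencies), and I must explicitly exclude the single-triangle graph, which is the one degenerate configuration where the claim fails and whose triangle would otherwise be counted against three boundary edges at once.
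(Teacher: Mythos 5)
Your proof is correct and follows essentially the same route as the paper's: every triangle carrying a boundary edge must have degree $2$ (degree equals $3$ minus the number of boundary sides, and degree-$1$ nodes are excluded), distinct boundary edges give distinct such triangles, and the boundary of the outer face has at least $3$ edges. You are in fact more careful than the paper in flagging the degenerate single-triangle graph, for which the statement as written fails; the only quibble is that you restrict to simple, hole-less polygons (the paper's \emph{filled} subclass) whereas the lemma is stated for general triangulation graphs, but your counting applies verbatim to the outer boundary, so nothing is lost.
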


\begin{proof}
The only triangles that can contribute to the polygon's boundary
or outer face
must have degree 2 in the graph, each contributing exactly 1 edge to
the boundary. Since the polygon has at least 3 edges, the result follows.
\end{proof}

A further subclass consists of the {\em filled triangulation graphs}, those
who have a representation whose corresponding polygon is simple with no holes. 
It is possible to fully characterize the biconnected subset of these graphs.

\begin{theorem}
\label{thm:triang}
Assume $G$ is biconnected.
$G$ is a filled triangulation graph if and only if $G$ has:
\begin{enumerate}
\item only nodes of degree 2 or 3
\item an embedding in the plane such that:
\begin{enumerate}
\item every internal node has degree 3;
\item there are at least 3 nodes of degree 2 on the boundary;
\item if there are any degree 3 nodes on the boundary, all of the degree 2 nodes
cannot be consecutive; and
\item if the degree 2 nodes on both ends of a chain of degree 3 boundary
nodes are removed, the graph remains connected.
\end{enumerate}
\end{enumerate}
\end{theorem}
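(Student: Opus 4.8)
The plan is to recognize a triangulation graph as the \emph{inner dual} of a triangular mesh and to translate all four conditions through this duality. Concretely, given a filled-triangulation representation, I would form the planar dual of the mesh and delete the node of the outer (unbounded) face; call the result $G$ with its induced embedding. The dictionary to record is: nodes of $G$ $\leftrightarrow$ triangles; edges of $G$ $\leftrightarrow$ interior mesh edges shared by two triangles; bounded faces of $G$ $\leftrightarrow$ interior mesh vertices; and the single outer face of $G$ $\leftrightarrow$ the polygon boundary, recovered once we ``cut open'' that face. Since each triangle has three edges and each interior edge is shared by exactly two triangles, the degree of a node equals $3$ minus the number of its boundary edges; thus degree-$3$ nodes are interior triangles, degree-$2$ nodes are triangles with exactly one boundary edge, and degree $1$ would mean two boundary edges. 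Because $G$ is biconnected, its outer face is a simple cycle $B$, so ``boundary node'', ``consecutive'', and ``chain of degree-$3$ boundary nodes'' are all well defined. Reading $B$, the degree-$2$ nodes are in bijection with polygon edges, and the (possibly empty) runs of degree-$3$ nodes between consecutive degree-$2$ nodes are the \emph{fans} at the polygon vertices.

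For the forward direction I would verify each condition from this dictionary. Condition~1 is immediate: max degree $3$ is the triangulation property and min degree $2$ is biconnectivity. Condition~2(a) holds because an internal node has no boundary edge, hence degree $3$; condition~2(b) because the polygon has at least three edges, each lying on a distinct degree-$2$ triangle. For~2(c) the key observation is that a single non-empty fan is impossible: if every polygon vertex but one had an empty fan, the two triangles flanking each empty fan would share a common apex (the vertex opposite their boundary edges), and since a shared triangle forces equal apexes this propagates around $B$, forcing the remaining fan to be empty as well; hence the existence of any degree-$3$ boundary node forces at least two non-empty fans, so the degree-$2$ nodes cannot all be consecutive. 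For~2(d), if deleting the two degree-$2$ triangles $t_{i-1},t_i$ flanking a fan at polygon vertex $v_i$ disconnected $G$, then every adjacency path between the two sides would pass through a triangle incident to $v_i$, so the two sub-regions would meet only at $v_i$, pinching the polygon and contradicting simplicity.

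For the backward direction I would invert the duality using the given embedding. Place one interior mesh vertex at each bounded face of $G$; assign each maximal chain of degree-$3$ boundary nodes together with its two flanking degree-$2$ nodes to a common polygon vertex $v_i$ (the fan apex), and assign each degree-$2$ node to a boundary edge $e_i=v_iv_{i+1}$. Cutting the outer face open along these boundary edges turns $B$ into the polygon boundary $v_0\,e_0\,v_1\cdots v_{k-1}\,e_{k-1}$. Conditions~1 and~2(a) guarantee that every node then has exactly three incident mesh vertices, so each node is a genuine triangle; 2(b) gives a boundary of length at least three; 2(c) rules out the degenerate single-fan boundary; and 2(d) guarantees there are no pinch points, so the cut-open dual is a topological disk with a simple boundary. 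Finally I would realize this combinatorial mesh geometrically with straight segments, e.g.\ by an incremental fan-insertion keeping a concave upper envelope exactly as in Section~\ref{sec:outerplanar}, yielding a triangulation of a simple, hole-free polygon whose inner dual is $G$.

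The hard part will be the backward direction's topology: proving that cutting the outer face open actually yields a \emph{disk} with a \emph{simple} boundary rather than a surface with holes or self-contacts, and that conditions~2(c) and~2(d) are precisely what prevent these two failures. A secondary obstacle is showing that a non-degenerate straight-line realization always exists once the combinatorial mesh is fixed, which I expect to dispatch by choosing fan apexes and boundary-edge midpoints so that every upper envelope stays concave, mirroring the earlier constructions.
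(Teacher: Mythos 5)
Your proposal is correct and follows essentially the same route as the paper: both directions go through the inner-dual correspondence between triangles and nodes, with the same boundary bookkeeping (one polygon vertex per maximal chain of degree-3 boundary nodes, one polygon edge per degree-2 triangle), and a final straight-line realization of the reconstructed dual. The paper phrases the "hard part" you identify as showing the constructed dual $G'$ is a simple planar graph (no loops by biconnectivity, no multiedges by condition 2(d) plus a count showing $G'$ has at least 3 boundary nodes) and then invokes a F\'ary-type embedding, which is the same obstacle you isolate, handled combinatorially rather than topologically.
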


\begin{proof}
Let $G$ be a filled triangulation graph. Since it is biconnected,
it cannot have any vertices of degree 1. Its triangulation representation
yields an embedding with all internal nodes of degree 3. 
Lemma~\ref{prop:deg2} shows we have at least 3 nodes of degree 2
on the boundary.

Suppose there are degree 3 nodes on the boundary
and the degree 2 nodes are consecutive. The chain of degree 2 nodes
cannot connect at a single vertex, because this would be a cut vertex.
Thus, if we remove all triangles corresponding to degree 2 nodes, we
would have a triangulation representation of a graph with exactly
2 vertices of degree two, which is not allowed by Lemma~\ref{prop:deg2}.

To finish the proof of necessity, we note that for two degree 2 triangles
to disconnect the triangulation, they would have to share an interior vertex.
On the other hand, if all intervening triangles on the boundary have
degree 3, they can contribute nothing to the polygon boundary, so the
two degree 2 must share another vertex. But then, they share a side,
so there can't be any intervening degree 3 triangles.

Next, we prove sufficiency. We assume $G$ is biconnected, all of its
vertices have degree 2 or 3, and it has the specified embedding.
We construct a graph $G'$ which is a special kind of dual of $G$. $G'$ contains
the dual of the interior faces and edges of $G$. In addition, $G'$ has a vertex
for each maximal sequence of degree 3 nodes on the boundary, and
a vertex for each boundary edge connecting two degree 2 nodes. These are placed
in the external face of $G$, near the corresponding nodes or edges. These
vertices are connected in a cycle of $G'$ following the ordering induced by
the boundary nodes and edges of $G$. Finally, for each boundary edge 
$e$ of $G$,
we add an edge from the node of $G'$ corresponding to the interior face of
$G$ containing $e$ to one of the vertices on the external cycle of $G'$. 
If $e$ is adjacent to a vertex of degree 3, we connect the edge to the
node of $G'$ corresponding to the degree 3 vertex. Otherwise, we connect
to the node of $G'$ corresponding to $e$.

It is immediate from the construction that $G'$ is a planar embedding of
nodes and edges; all interior faces are triangles; 
and there is a 1-1 correspondence between faces of $G'$ and vertices of $G$ 
and between edges in $G$ and $G'$. We need to show that $G'$ is a simple graph.

As $G$ is biconnected, $G'$ can have no loops. Property 2(d) of the 
embedding implies that each interior face is connected to at most one of
the nodes associated with the exterior face.
The only way that multiedges could then occur would be if 
$G'$ has a boundary consisting of
two nodes and two edges. We know $G$ has as least $n_2 \ge 3$ nodes of 
degree 2 on the boundary. If there are only degree 2 nodes on the boundary,
$G'$ has a boundary of $n_2$ nodes. Assume $G$ has some degree 3 nodes on the
boundary. If these nodes split into 3 or more paths, the construction creates
at least 3 nodes on the boundary of $G'$. If not, they must split into 2 paths,
since the degree 2 nodes must be separated. One group of degree 2 nodes must
contain at least 2 nodes. The construction then creates one node for each
group of degree 3 nodes, and at least one node for the path of more
than 2 degree nodes, again given $G'$ at least 3 boundary nodes.

As $G'$ is simple, by using one
of the algorithms (e.g, \cite{fpp-sssfe-88}) for making the edges of 
planar graph into line segments while retaining the embedding, we derive a 
triangulation representation of $G$, completing the proof.
\end{proof}

\begin{figure}[th]
\centering
\vspace{-1cm}
\begin{tabular}{ccc}
\includegraphics[scale=0.35]{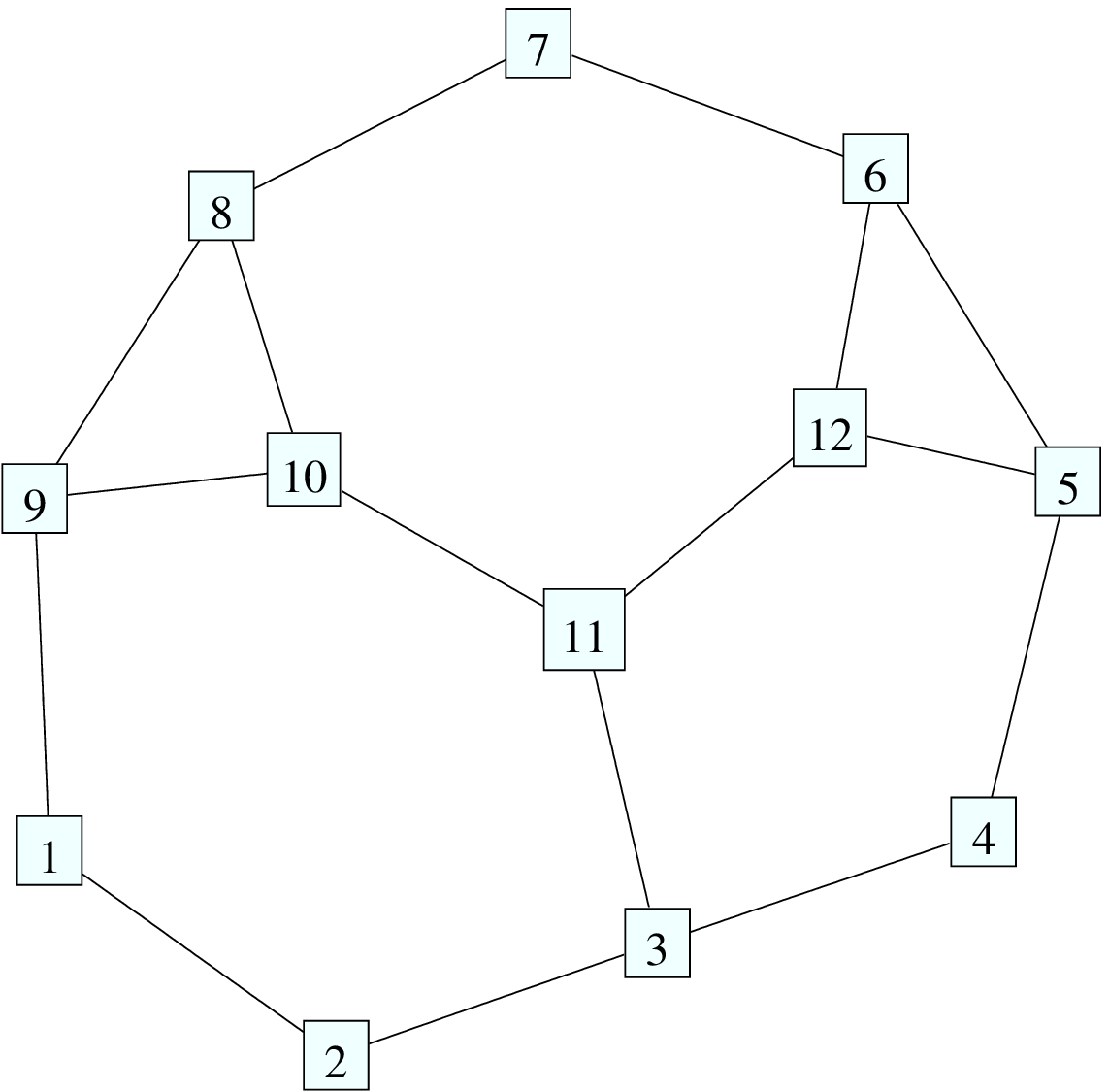} & \includegraphics[scale=0.35]{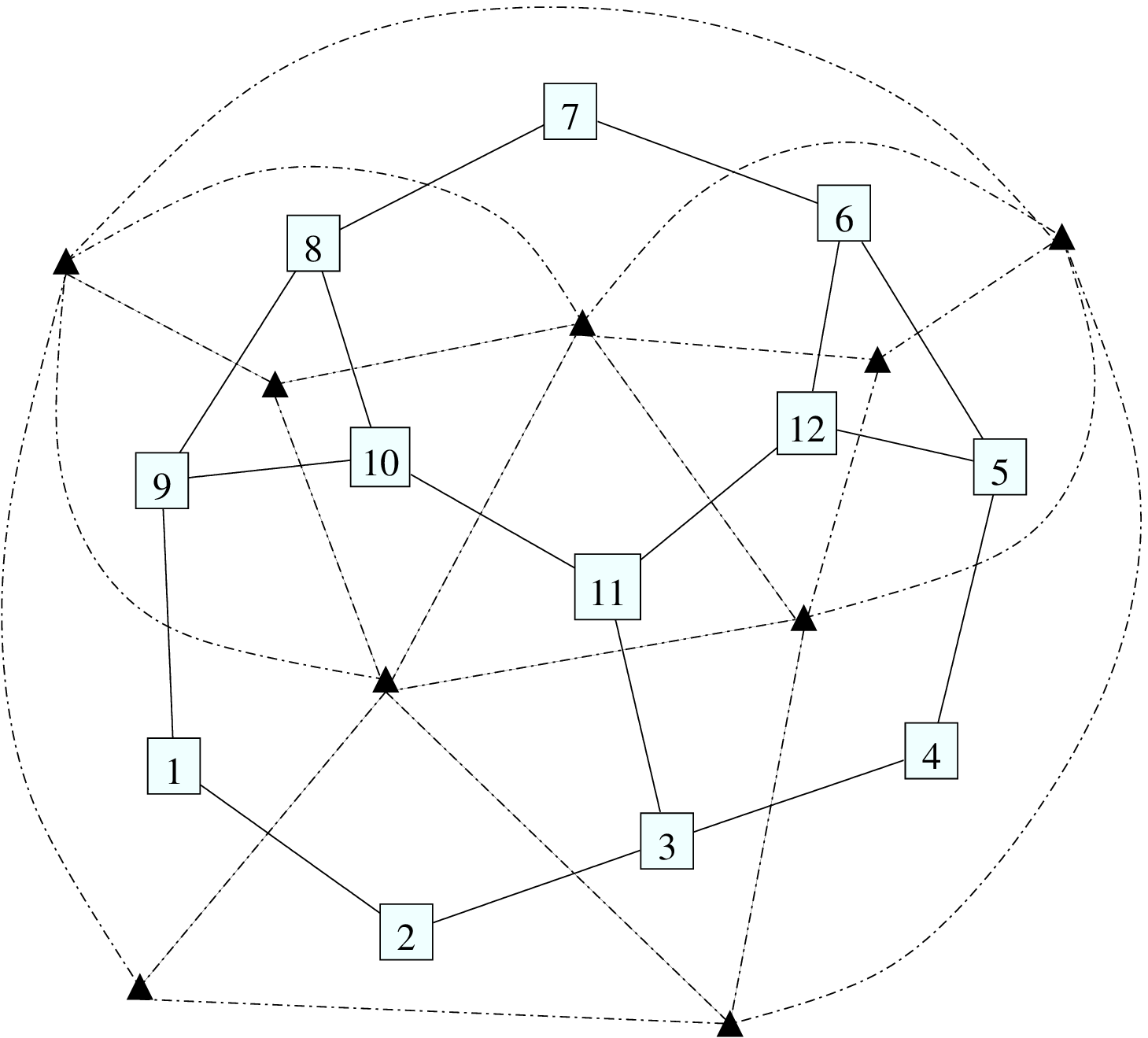} & \includegraphics[scale=0.35]{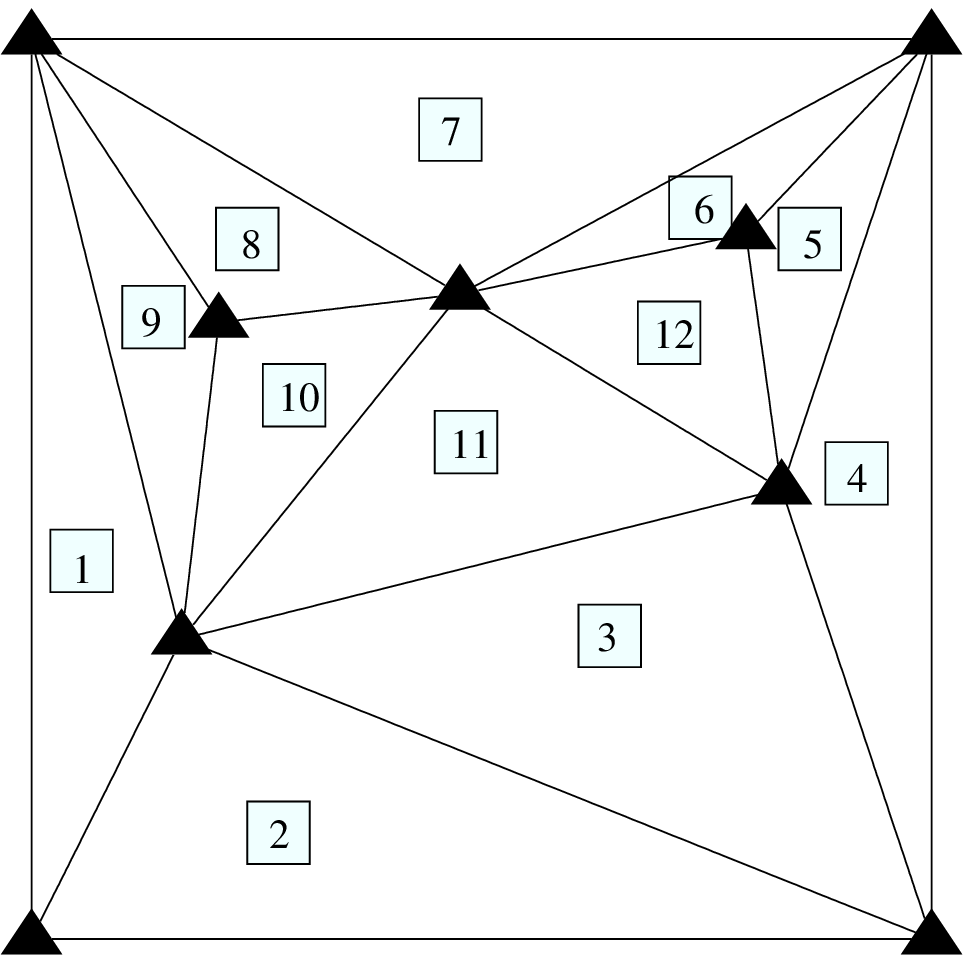} \\
{\bf (a)} & {\bf (b)} &  {\bf (c)}
\end{tabular}
\caption{\small\sf Constructing a triangulation graph. {\bf (a)} Original graph; {\bf (b)} Creating the ``dual'' graph; {\bf (c)} Straightening the edges.}
\label{fig:triang}
\end{figure}

Perhaps not surprisingly, the conditions of the theorem have a similar
feel to those for rectangular drawings \cite{Rahman04}.
It is also not hard to see that the result can probably be derived from
the duality between planar, cubic, 3-connected graphs and triangulations
of the plane \cite{SR2}, but our proof seems more straightforward. 
Lastly, we note that Theorem~\ref{thm:triang} gives another proof
that the hexagonal grid graphs of Section~\ref{sec:grids} have a touching
triangle representation.

Figure~\ref{fig:triang} demonstrates the algorithm. 
Figure~\ref{fig:triang}(a) shows a graph satisfying the conditions of the
theorem. In Figure~\ref{fig:triang}(b), we have added a node for each
internal face, and node on the outside for each sequence of degree 3 nodes
or for each edge both of whose nodes have degree 2. This gives us a planar
graph with each face having three sides and associated with a node of the
original graph. Straightening the sides of the faces makes each face
a triangle.

\section{Necessary conditions}
\label{sec:necessary}

Thus far, we have shown that various categories of graphs are in 
$TTG$. Now, we wish to pursue some necessary conditions which will
eliminate many graphs from $TTG$. We start with some definitions.

Given triangles $T_0$ and $T_1$, pick
two sides $s_0$ and $s_1$, one from each triangle,
and orient the side counter-clockwise around the interior of
the triangle. Extend the
sides into directed lines $L_0$ and $L_1$.
If the lines intersect at a unique point, the intersection
is {\em feasible} if a non-trivial portion of $s_0$ lies
to the right of $L_1$ and a non-trivial portion of
 $s_1$ lies to the right of $L_0$. Of the four
angles formed at a feasible intersection, there is a unique one
corresponding to a right turn. We call this a {\em feasible angle}.
Two sides are {\em collinear} if the directed lines $L_0$ and $L_1$
are identical.

\begin{lemma}If a triangle T touches both $T_0$ and $T_1$, using two
distinct sides, one of its angles must be a feasible angle of T0 and T1. 
\end{lemma}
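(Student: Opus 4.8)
The plan is to locate the relevant vertex of $T$ and show that its interior angle is forced to coincide with the feasible angle of $T_0$ and $T_1$. First I would unpack what ``$T$ touches both $T_0$ and $T_1$ using two distinct sides'' means: $T$ shares a nontrivial segment of one of its sides with a side $s_0$ of $T_0$, and a nontrivial segment of a \emph{different} side with a side $s_1$ of $T_1$. Nontrivial side-to-side contact forces the touching side of $T$ to be collinear with $s_0$ (hence to lie on the line $L_0$), and the other touching side of $T$ to lie on $L_1$.

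The key structural observation is that two distinct sides of a triangle meet in exactly one vertex; call it $v$. Since one of these sides lies on $L_0$ and the other on $L_1$, the point $v$ lies on both lines. In particular $L_0$ and $L_1$ cannot be collinear, because two distinct sides of a single triangle cannot lie on a common line; hence they meet in a unique intersection point, which is precisely $v$. The interior angle of $T$ at $v$ is therefore one of the four angles formed at the intersection of $L_0$ and $L_1$, and it remains only to check that the intersection is feasible and to identify this angle as the feasible one.

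For this I would use the orientation conventions. Because $T$ and $T_0$ touch along a common segment, their interiors lie on opposite sides of $L_0$; since the sides are oriented counterclockwise, the interior of $T_0$ is to the left of $L_0$, so $T$ lies to the right of $L_0$. The same argument with $T_1$ puts $T$ to the right of $L_1$. Hence the wedge occupied by $T$ at $v$ is contained in the intersection of the two right half-planes, which is exactly one of the four angles at $v$. To confirm feasibility, I would observe that the contact segment of $T$ with $T_0$ is a nontrivial subsegment of $s_0$ running from $v$ along the boundary ray of this wedge that lies on $L_0$; since the wedge lies to the right of $L_1$, that subsegment witnesses a nontrivial portion of $s_0$ to the right of $L_1$. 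The symmetric statement produces a nontrivial portion of $s_1$ to the right of $L_0$, so the intersection is feasible and the wedge at $v$ is its unique right-turn (feasible) angle.

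The main obstacle is the orientation bookkeeping in this last step: one must verify carefully that the counterclockwise orientation of $s_0$ as a side of $T_0$ really forces $T$'s interior onto the right of $L_0$, that the ray of $L_0$ bounding $T$'s wedge is the one lying to the right of $L_1$, and that the ``right of both directed lines'' wedge is exactly the angle singled out as the right-turn angle in the definition. Everything else follows once $v$ is recognized as the common intersection point, and the collinear case is ruled out automatically since two distinct sides of $T$ cannot be collinear.
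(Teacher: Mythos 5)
Your proof is correct and takes essentially the same approach as the paper, whose entire argument is the one-line observation that the angle of $T$ at the vertex where its two touching sides meet is a feasible angle, declared ``immediate'' with reference to the figure. You simply fill in the collinearity and orientation bookkeeping that the paper leaves to the reader.
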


\begin{proof}
If $\alpha$ is the angle of $T$ determined by the two touching sides
of $T_0$ and $T_1$, it immediate that  $\alpha$ is a feasible angle.
See Figure~\ref{fig:feasible}.
\end{proof}

\begin{figure}[th]
\begin{center}
\vspace{-.5cm}
\includegraphics[scale=.5]{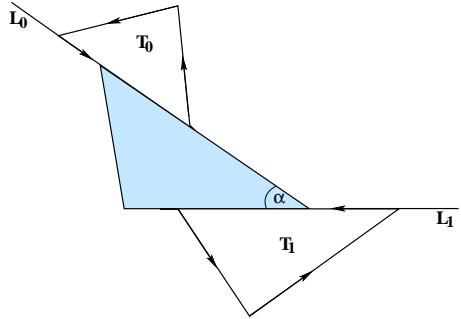}
\caption{\small\sf A triangle $T$ touching two other triangles 
$T_0$ and $T_1$. The angle $\alpha$ is a feasible angle of
$T_0$ and $T_1$.}
\label{fig:feasible}
\end{center}
\end{figure}

This lemma already greatly reduces the possible $TTG$ graphs.
If two triangles have no collinear sides, there can be at most nine
triangles touching both both of them, since any such triangle
eliminates at least one of the feasible angles.
If two sides are collinear, one triangle can touch those two sides.
Any other triangles must correspond to feasible angles, and since
the remaining sides of both triangles are all to the left of the two
collinear sides, there can be at most 4 feasible angles.
We next work at tightening these bounds.

For a node $u$ in $G$, we let $N_u$ be the nodes in $G$ joined to
$u$ by an edge.
If $u$ and $v$ are two nodes in a graph $G$, define $N_{uv}$ as the
mutual neighbors of $u$ and $v$, that is, $N_{uv} = N_u \cap N_v$.
Finally, define $E_{uv}$ be the subset of edges of $G$ induced by
$N_{uv}$.

\begin{theorem}\label{thm:sideside}
Let G be a $TTG$, and let $u$ and $v$ be two nodes
in G joined by an edge. Then $|N_{uv}| \le 3$ and $|E_{uv}| \le 1$.
\end{theorem}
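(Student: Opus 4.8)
The plan is to place the contact between $u$ and $v$ on a horizontal line $L$, with $T_u$ lying above $L$, $T_v$ below it, and their shared contact a segment $\sigma$ with endpoints $p$ and $q$. The first reduction is to note that since $\sigma$ is interior to $T_u\cup T_v$, every common neighbor $w\in N_{uv}$ must touch $T_u$ and $T_v$ along sides lying on the boundary $\partial(T_u\cup T_v)$, and this boundary consists of exactly two arcs (one from each triangle) meeting only at $p$ and $q$. I would then split the common neighbors into two types using the preceding feasible-angle lemma: either (a) $T_w$ touches $T_u$ and $T_v$ with two \emph{distinct} sides, in which case the lemma says one of its angles is a feasible angle of $T_u,T_v$; or (b) $T_w$ touches both with a single side, meaning $T_u$ and $T_v$ both rest on that one side of $T_w$.

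For the bound $|N_{uv}|\le 3$ I would count each type. For type (a), the two touching sides of $T_w$ share a vertex, and since they lie on the supporting lines of a side of $T_u$ and a side of $T_v$, that vertex is pinned to the unique intersection of those two lines; geometrically this forces $T_w$ into an exterior wedge at a \emph{reflex} vertex of $T_u\cup T_v$, and the only candidate reflex vertices are the two endpoints $p$ and $q$ of $\sigma$. Hence each of $p,q$ admits at most one type-(a) neighbor. For type (b), the side $s_u$ of $T_u$ lying on $L$ has $T_v$ on its far side and so cannot be a resting line; this, together with the fact that the touching sides of $T_u$ and $T_v$ are anti-parallel along $L$, is exactly what collapses the generic count of nine feasible angles down to a handful of admissible positions, leaving room for at most one further neighbor. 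Summing gives $|N_{uv}|\le 3$. I expect the delicate step here to be precisely this refinement — showing that the reflex-wedge positions and the resting configuration cannot be jointly realized more than three times, i.e., sharpening the crude bound of five implicit in the discussion down to three.

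For $|E_{uv}|\le 1$, suppose two common neighbors $w_i$ and $w_j$ are adjacent. Then $T_u,T_v,T_{w_i},T_{w_j}$ touch pairwise and so form a touching-triangle representation of $K_4$. I would argue that such a representation is essentially rigid: the edge between $T_{w_i}$ and $T_{w_j}$ must emanate from the contact $\sigma$, so $w_i$ and $w_j$ occupy the two distinct endpoint-wedges at $p$ and $q$ (or one wedge and the resting strip) and are separated by $\sigma$. A second edge among the common neighbors would require a third neighbor that touches one of $w_i,w_j$ while still touching both $T_u$ and $T_v$, and there is no remaining wedge for it without overlapping an already-placed triangle or crossing $\sigma$; this contradiction yields $|E_{uv}|\le 1$. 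The main obstacle throughout is thus the geometric case analysis that localizes every common neighbor to a reflex endpoint-wedge or the resting strip and that establishes the rigidity of the induced $K_4$ sub-representation; once that is in hand, both inequalities follow by counting the available positions.
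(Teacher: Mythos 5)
Your high-level strategy (reduce to the feasible-angle lemma and then localize where a common neighbor can sit) is the same as the paper's, but the localization step contains a genuine error that the rest of the argument depends on. You claim that every common neighbor touching $T_u$ and $T_v$ with two distinct sides has its pivotal vertex ``pinned'' to a reflex vertex of $T_u\cup T_v$, and that the only candidates are the endpoints $p,q$ of the contact segment $\sigma$, giving at most two such neighbors. This is false. The pivotal vertex lies at the intersection of the two \emph{supporting lines} of the touched sides, and only for the pairings involving the two sides on the contact line $L$ does that intersection land at $p$ or $q$. A pairing of a lateral side of $T_u$ with a lateral side of $T_v$ (e.g.\ $\vec{31}$ with $\vec{bc}$ in the paper's notation, when the apex $c$ of $T_v$ lies in region III) produces a feasible point away from $\sigma$, realized by a large triangle that wraps around the outside of the configuration and touches one slanted side of each triangle. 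The paper's bound of three comes precisely from a case analysis on where this third, non-local feasible point can exist (one side contained in the other versus partial overlap, and the position of $c$ relative to the extended lateral sides of $T_u$); your framework of ``two reflex wedges plus one resting position'' does not enumerate these configurations, so the arithmetic $2+1=3$ is coincidental rather than proved.

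Two further points. First, your type (b) --- a single side of $T_w$ touching both triangles --- cannot occur by ``resting'' on $L$: a convex triangle with a side on $L$ lies entirely on one side of $L$, so it cannot simultaneously touch the bottom side of $T_u$ (which faces down) and the top side of $T_v$ (which faces up). Type (b) arises only in the degenerate situation where a \emph{lateral} side of $T_u$ and a lateral side of $T_v$ are collinear, which your setup does not consider. Second, the argument for $|E_{uv}|\le 1$ inherits the gap: the paper obtains it by observing that, among the (up to three) feasible points, the one at the pairing $\vec{12}$ with $\vec{ca}$ is geometrically separated from the other two, so at most one adjacent pair of neighbor triangles can exist; your ``rigidity of the induced $K_4$'' argument presupposes that all neighbors sit in the two endpoint wedges, which is exactly the unproved (and incorrect) localization. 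To repair the proof you would need to carry out the full region-by-region analysis of which side pairings admit feasible angles, as the paper does.
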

\begin{proof}
Let $T_u$ and $T_v$ be the two triangles corresponding to nodes $u$ and $v$.
Since the two nodes share an edge,  $T_u$ and $T_v$ must touch. There are basically 
two possibilities: one side is totally contained in the other or not.

In the first case, we have the situation represented in Figure~\ref{fig:inside}.
We immediately note that there can be no feasible angle associated with $\vec{12}$ and
$\vec{ab}$. In addition, $\vec{ab}$ is to the left of both $\vec{23}$ and $\vec{31}$.
On the other hand, there are feasible angles formed by $\vec{12}$ with $\vec{bc}$ and
$\vec{ca}$. So, we only have to consider pairings of $\vec{23}$ and $\vec{31}$
with $\vec{bc}$ and $\vec{ca}$.

\begin{figure}[ht]
\centering
\begin{tabular}{ccc}
\includegraphics[scale=0.5]{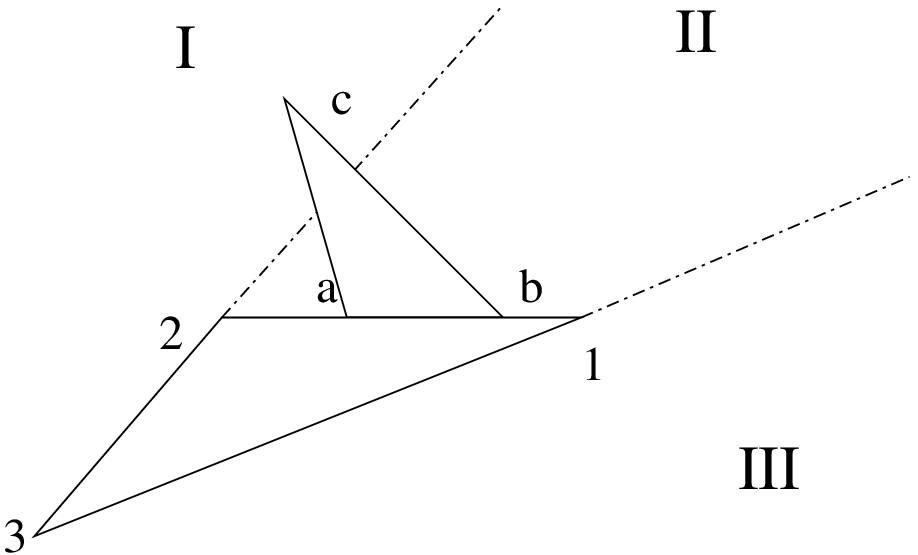} & \includegraphics[scale=0.5]{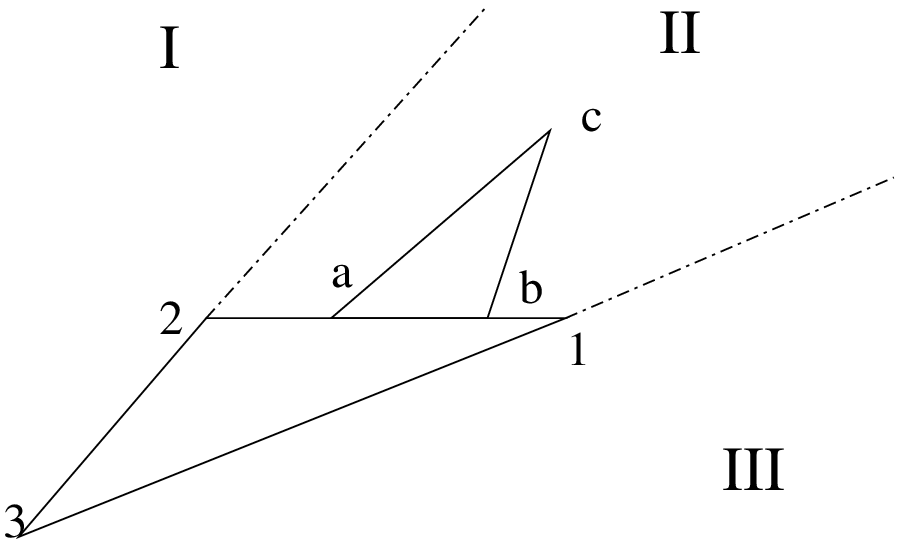} & \includegraphics[scale=0.5]{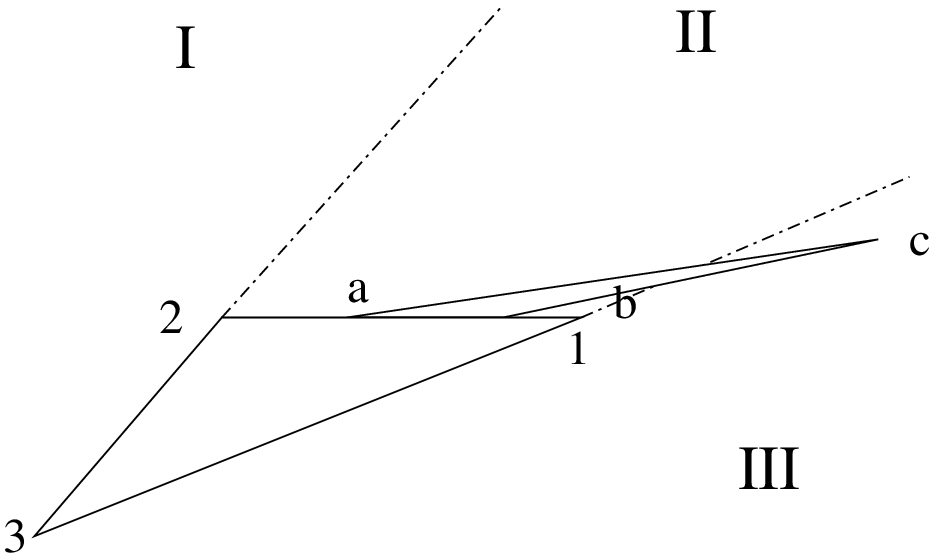} \\
{\bf (a)} & {\bf (b)} &  {\bf (c)}
\end{tabular}
\caption{\small\sf Touching triangles with one side contained in the other. {\bf (a)} Node $c$ in region I; {\bf (b)} Node $c$ in region II; {\bf (c)} Node $c$ in region III.}
\label{fig:inside}
\end{figure}

If point $c$ is placed in region II, both  $\vec{bc}$ and $\vec{ca}$ are to the left of
$\vec{23}$ and $\vec{31}$, so there are no more feasible angles, giving a total of two. 

If $c$ is in region III,
we get a new feasible angle formed by $\vec{31}$ and $\vec{bc}$. In this case, though, we are
left with $\vec{bc}$ and $\vec{ca}$ to the left of $\vec{23}$, and $\vec{31}$ to the left of
$\vec{ca}$. Thus, we have at most three feasible points. We also note that any triangle
associated with the feasible angle formed by $\vec{12}$ and $\vec{ca}$ cannot share an edge
with any triangle of the other two feasible angles, so there can be at most one edge among the
neighbors of $u$ and $v$. 

The argument is similar if $c$ is in region I. 

If points $1$ and $b$ are identical, the same arguments hold except, in addition, we no longer
have a feasible angle formed by $\vec{12}$ and $\vec{bc}$ because $\vec{12}$ is to the left
of $\vec{bc}$. Thus, we have at most two mutual neighbors and no edge between them. If points
$2$ and $a$ are the same, the same arguments hold. Putting these two cases together, we find that 
if $1$ and $b$ are identical and $2$ and $a$ are identical, there can be at most one feasible angle.

The remaining case occurs when neither shared side is contained in the other. This is the
situation represented by Figure~\ref{fig:overlap}.

\begin{figure}[ht]
\centering
\begin{tabular}{ccc}
\includegraphics[scale=0.42]{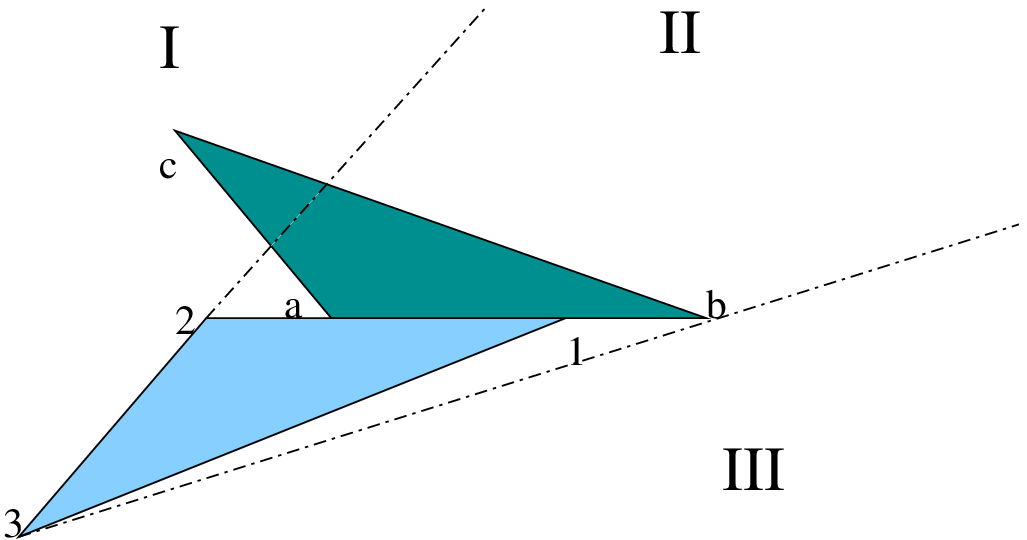} & \includegraphics[scale=0.42]{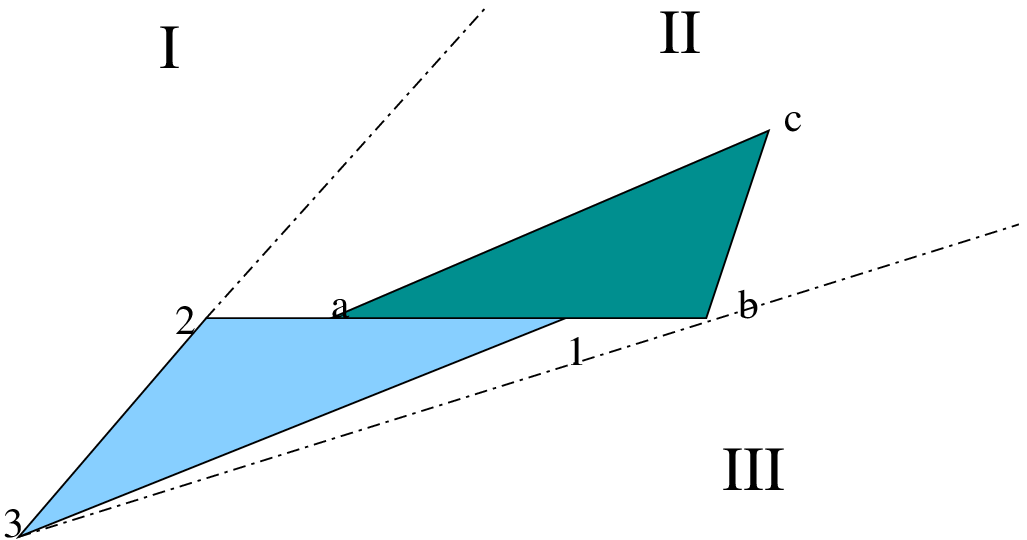} & \includegraphics[scale=0.42]{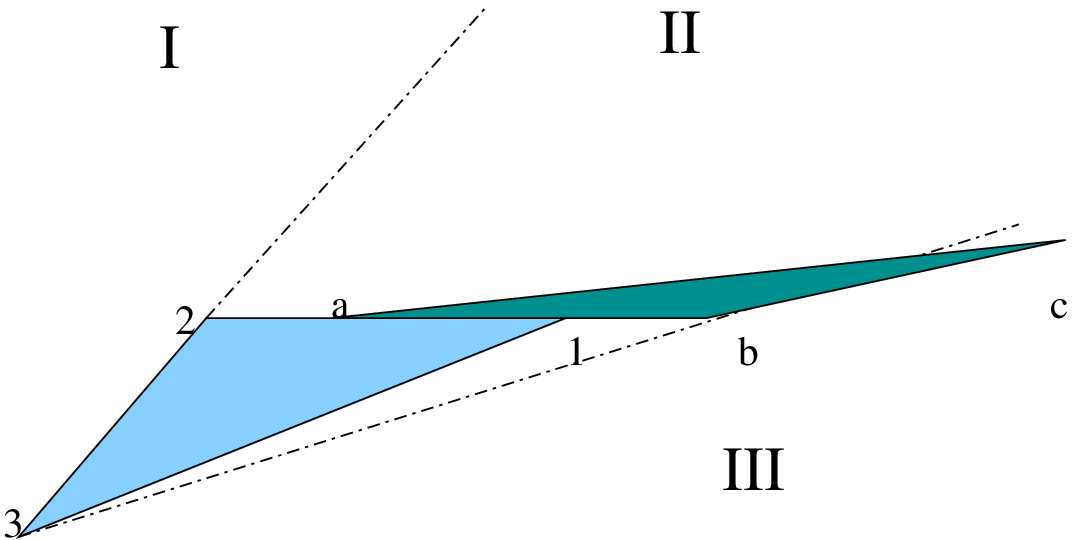} \\
{\bf (a)} & {\bf (b)} &  {\bf (c)}
\end{tabular}
\caption{\small\sf Touching triangles with touching sides overlapping. {\bf (a)} Node $c$ in region I; {\bf (b)} Node $c$ in region II; {\bf (c)} Node $c$ in region III.}
\label{fig:overlap}
\end{figure}

As previously, there can be no feasible angle associated with $\vec{12}$ and
$\vec{ab}$, but now we have feasible angles formed by $\vec{12}$ and $\vec{ca}$, and by
$\vec{31}$ and $\vec{ab}$. In addition, $\vec{12}$ is to the left of $\vec{bc}$ and
$\vec{ab}$ is to the left of $\vec{23}$. 
Again, we are reduced to considering the four
pairings of $\vec{23}$ and $\vec{31}$ with $\vec{bc}$ and $\vec{ca}$.
If  $\vec{ca}$ is to the right of $\vec{31}$, then
$\vec{31}$ is to the left of $\vec{ca}$, and vice versa, so that pairing is not possible.
Finally, we note that if $c$ is in regions I or II, then $\vec{23}$ and $\vec{31}$ are to the
left of $\vec{bc}$, while if $c$ is in regions II or III, $\vec{bc}$ and $\vec{ca}$ are to
the left of $\vec{23}$. So, if $c$ is in region II, there are at most two feasible angles. 
Otherwise, there can be three but, as above, at most two of the
associated triangles can touch. 
\end{proof}

With this theorem, we see that the left two graphs in the top 
row of Figure~\ref{fig-tough} are not in $TTG$.
We next consider what happens to the set of common neighbors 
if we relax the condition that there is an edge between two nodes.

\begin{theorem}\label{thm:nonside}
Let G be a $TTG$, and let $u$ and $v$ be any two nodes
in G. Then $|N_{uv}| \le 4$ and $|E_{uv}| \le 2$.
\end{theorem}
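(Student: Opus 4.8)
The plan is to extend the feasible-angle bookkeeping of Theorem~\ref{thm:sideside} to the situation where $T_u$ and $T_v$ need not touch. For adjacent $u,v$ the stronger bounds of Theorem~\ref{thm:sideside} already apply, so I would assume $u$ and $v$ are non-adjacent, i.e. $T_u$ and $T_v$ are disjoint. As before, every common neighbor $w \in N_{uv}$ gives a triangle $T_w$ touching both $T_u$ and $T_v$, and since it meets two different triangles it does so along two distinct sides of $T_w$ (the exception, when one side of $T_w$ is collinear with a side of each, is treated separately). By the feasible-angle lemma the vertex angle of $T_w$ between these sides fills a feasible angle of $T_u$ and $T_v$. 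Moreover a triangle touching a fixed side $s_0$ of $T_u$ and a fixed side $s_1$ of $T_v$ must have its touching corner at the intersection $L_0\cap L_1$, so at most one common neighbor can occupy each side-pair angle; distinct common neighbors thus inject into distinct feasible angles, and it suffices to show there are at most four of them.

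To count feasible angles I would label the directed sides of $T_u$ as $\vec{12},\vec{23},\vec{31}$ and those of $T_v$ as $\vec{ab},\vec{bc},\vec{ca}$ and examine the nine side-pairings, just as in Theorem~\ref{thm:sideside}. The key geometric fact to exploit is that the three extended lines of a triangle leave no point to the right of all three at once: the interior is the common left half-plane, so every point lies to the right of at most two of $T_u$'s lines and at most two of $T_v$'s lines. Combined with the right-turn requirement, which kills a pairing $(\vec s,\vec t)$ whenever $\vec t$ lies entirely to the left of $\vec s$ or conversely, this prunes the nine pairings. I would organize the analysis by the position of $T_v$ relative to the three lines of $T_u$, mirroring the region~I/II/III case split of the earlier proof but now for disjoint triangles, and verify that at most four pairings are simultaneously feasible.

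For the edge bound I would observe that two common neighbors $w,w'$ are adjacent only if $T_w$ and $T_{w'}$ themselves touch along a side, which forces the two occupied feasible angles to share a bounding line of $T_u$ or of $T_v$, i.e. their defining side-pairs share a side. Within each configuration of the case split above I would then check that at most two of the surviving feasible angles are adjacent in this sense, yielding $|E_{uv}|\le 2$.

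The collinear degeneracies must be handled on their own. When a side of $T_u$ is collinear with a side of $T_v$, a single common neighbor can touch both along the shared line without occupying any feasible angle, and, as noted after the feasible-angle lemma, the remaining sides of both triangles all lie to one side of that line, leaving at most four feasible angles; a naive count then gives five. I expect the main obstacle to be precisely this tightening: showing that the collinear contact renders one of those four feasible angles infeasible so the total stays at four, together with the bookkeeping needed to confirm that every relative placement of two (possibly degenerate) triangles is covered by the case split.
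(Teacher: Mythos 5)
Your plan follows the same broad strategy as the paper's proof (prune the nine side-pairings via ``to the left of'' eliminations, organized by the relative position of the two triangles), but as written it has a genuine gap at its center: the reduction from nine feasible angles to four does not follow from the one geometric fact you state (that every point is to the right of at most two of a triangle's directed lines). The paper's proof consists almost entirely of the case analysis you defer --- collinear sides (Figure~\ref{fig:collinear}), a vertex touching the interior of a side with seven sub-cases by region (Figure~\ref{fig:pointside}), two vertices touching (Figure~\ref{fig:pointpoint}), and fully separated triangles --- and in each case the bound comes from configuration-specific eliminations plus a blocking argument (a triangle placed at one feasible point renders others unusable), not from a single uniform count. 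Your worry about the collinear case yielding five is resolved by exactly such eliminations ($\vec{23}$ with $\vec{ca}$ is never feasible, and at most one of $\vec{23}$/$\vec{bc}$ and $\vec{31}$/$\vec{ca}$ can be), giving only three there; but you have not carried this out, and the analogous work is missing in every other case.

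Two further points would fail as stated. First, non-adjacency of $u$ and $v$ does \emph{not} imply $T_u$ and $T_v$ are disjoint: in this model a shared point is not an edge, so non-adjacent vertices may have triangles meeting vertex-to-vertex or vertex-to-side. These point-contact configurations are precisely where $|N_{uv}|=4$ is attained (the paper notes Figures~\ref{fig:pointside} and~\ref{fig:pointpoint} are tight), whereas genuinely disjoint triangles give at most three because of blocking; restricting to disjoint triangles would therefore skip the hardest cases. Second, your mechanism for $|E_{uv}|\le 2$ --- that adjacent common neighbors must occupy feasible angles sharing a bounding line --- is asserted without proof and is not obviously true: two triangles sitting at distinct feasible angles could in principle touch along their third sides, which touch neither $T_u$ nor $T_v$. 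The paper instead verifies the two-edge bound case by case from the geometry of which feasible points can host mutually touching triangles.
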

\begin{proof}
The proof follows that style of the previous theorem. 
Let $T_u$ and $T_v$ be the two triangles corresponding to nodes $u$ and $v$.
We have already dealt with the two triangles sharing a side above.
So, we then consider the case when a pair of sides are collinear, as illustrated
in Figure~\ref{fig:collinear}.

\begin{figure}[ht]
\centering
\vspace{-1cm}
\begin{tabular}{ccc}
\includegraphics[scale=0.5]{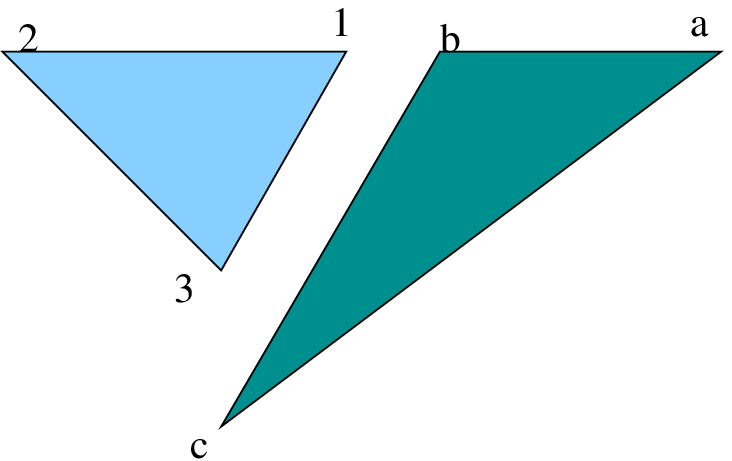} & \includegraphics[scale=0.5]{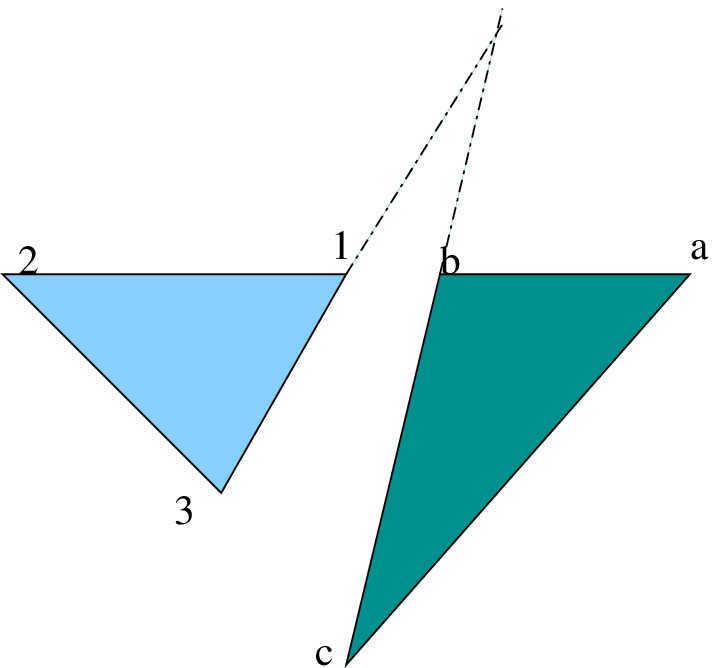} & \includegraphics[scale=0.5]{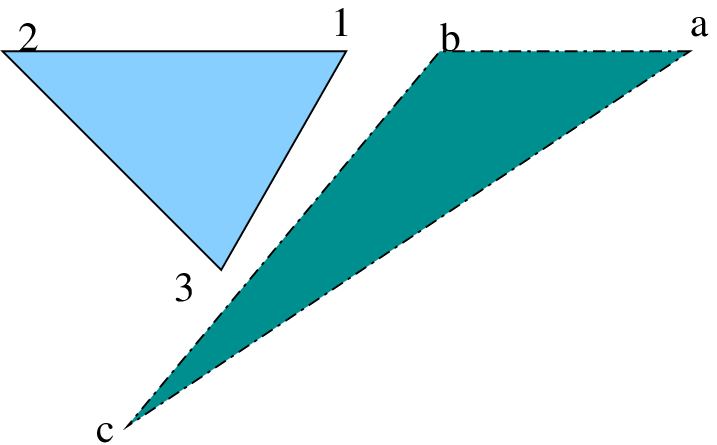} \\
{\bf (a)} & {\bf (b)} &  {\bf (c)}
\end{tabular}
\caption{\small\sf Collinear sides{\bf (a)} $\vec{31}$ and $\vec{bc}$ parallel; {\bf (b)} $\vec{31}$ and $\vec{bc}$ meeting above; {\bf (c)} $\vec{31}$ and $\vec{bc}$ meeting below.}
\label{fig:collinear}
\end{figure}

For this case, we can place a triangle touching $\vec{12}$ and $\vec{ab}$.
Since both triangles are to the left of both $\vec{12}$ and $\vec{ab}$, these
sides cannot be used in any other feasible angle. There can be no feasible
angle formed by $\vec{23}$ and $\vec{ca}$, since, if any part of $\vec{ca}$
is to the right of $\vec{23}$, the latter must be to the left of $\vec{ca}$,
and vice versa. In addition,
there can only be one of the two possible feasible angles 
formed by $\vec{23}$ and $\vec{bc}$ or by $\vec{31}$ and $\vec{ca}$.
Thus, there can be at most three touching triangles. 
(A more careful analysis shows that case (a) can have at most two,
while cases (b) and (c) will have three only if the triangles touch.)

For the next case, we consider when a vertex of one triangle
touches the interior of a side of the other, as shown in
Figure~\ref{fig:pointside}. The dotted lines indicate the lines 
$\vec{23}$ and $\vec{31}$,
and divide the area into three regions.
We consider the cases determined by which regions contain vertices $a$ and $b$. We note that if
$a$ is in region I, $b$ must also be in that region. We can also assume that both $a$ and $b$ do
not lie on either $\vec{23}$ and $\vec{31}$, as this was covered by the collinear case addressed above.
In all cases, we have feasible points determined by
$\vec{12}$ with both $\vec{ca}$ and $\vec{bc}$. 
Also, in all cases either $\vec{ca}$ is to the left of $\vec{23}$, or vice versa, so this pair
is eliminated. The similar condition holds for $\vec{bc}$ and $\vec{31}$.

For the case when $\vec{ab}$ lies in region II (Figure~\ref{fig:pointside}(a)), 
$\vec{12}$ can also form a feasible point with $\vec{ab}$. On the other
hand, the triangle $abc$ lies to the left of both $\vec{23}$ and $\vec{31}$, 
so we are limited to three feasible points. 

When $\vec{ab}$ lies in region I (Figure~\ref{fig:pointside}(b)), the 
triangle $abc$ is to the left of $\vec{31}$, so the latter has no
feasible points. There is always a feasible point fixed by
$\vec{23}$ and $\vec{bc}$. If $\vec{12}$ is to the left of
$\vec{ab}$, the only remaining possibility is given 
by $\vec{ab}$ and $\vec{23}$.
If $\vec{12}$ is partly to the right of $\vec{ab}$,
both  $\vec{ab}$ and $\vec{23}$ and  $\vec{ab}$ and $\vec{12}$
give feasible points, but a triangle placed at one blocks the other
(and the feasible point of $\vec{23}$ and $\vec{bc}$ as well). Thus,
we are limited to four touching triangles.

The case when $\vec{ab}$ lies in region III (Figure~\ref{fig:pointside}(c))
is symmetric.

We next consider $b$ in region I and $a$ in 
region II (Figure~\ref{fig:pointside}(d)). The
triangle $abc$ is to the left of $\vec{31}$, so the latter has no
feasible points. In addition, $\vec{23}$ is to the left of
$\vec{ab}$, leaving at most four feasible points.

If we leave $a$ in region II but move $b$ to region III
(Figure~\ref{fig:pointside}(e)), we have a similar situation,
with triangle $abc$ is to the left of $\vec{23}$
and $\vec{ca}$ is to the left of $\vec{31}$.
 
Switching their roles, with $a$ in region III and $b$ in region II
(Figure~\ref{fig:pointside}(f)),
we still have  triangle $abc$ is to the left of $\vec{23}$ but now
$\vec{31}$ is to the left of $\vec{ab}$.

In the final sub-case, $b$ lies in region I and $a$ lies in region III
(Figure~\ref{fig:pointside}(g)).
Here, the triangle $123$ lies to the left of $\vec{ab}$, eliminating
all feasible points involving the latter. We are left with two
remaining possibilities: $\vec{bc}$ with $\vec{23}$ 
and $\vec{ca}$ with $\vec{31}$, for a total of four.

\begin{figure}[th]
\centering
\vspace{-1cm}
\begin{tabular}{ccc}
\includegraphics[scale=0.5]{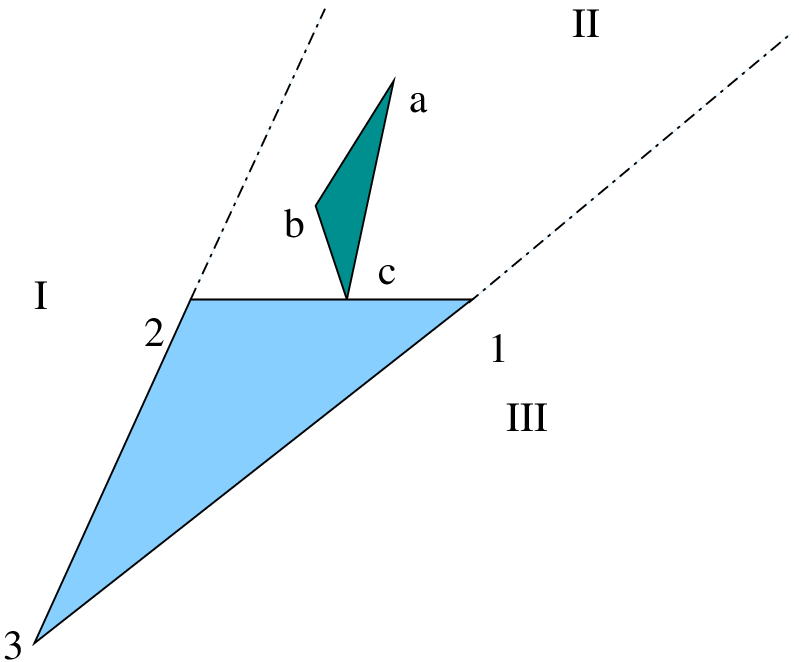} & \includegraphics[scale=0.5]{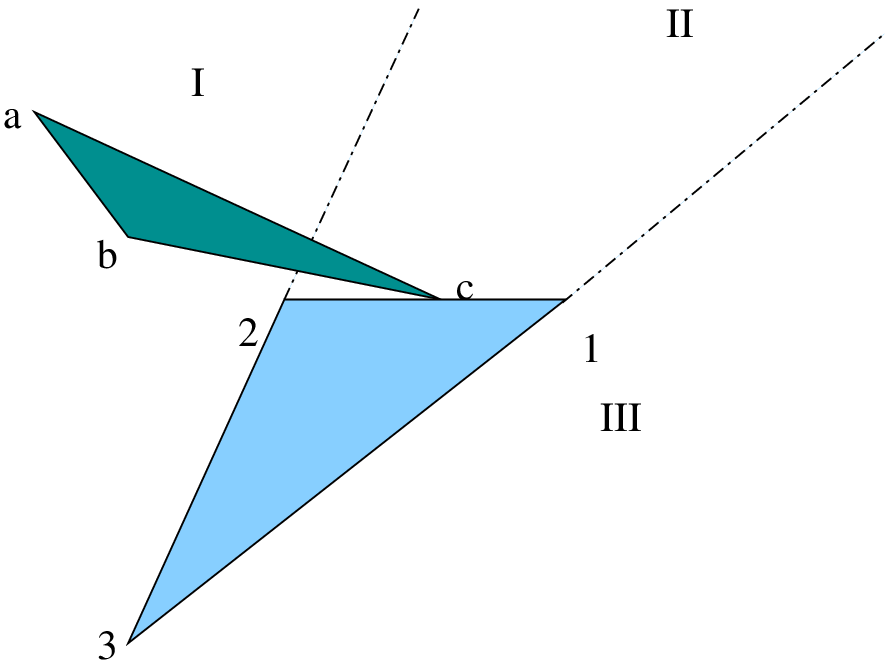} & \includegraphics[scale=0.5]{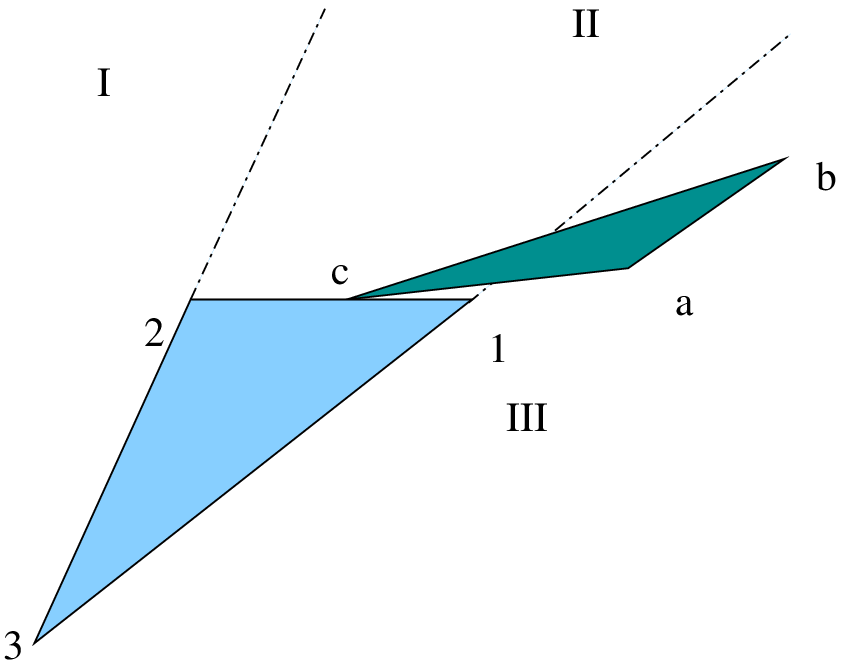} \\
{\bf (a)} & {\bf (b)} & {\bf (c)} \\
\includegraphics[scale=0.5]{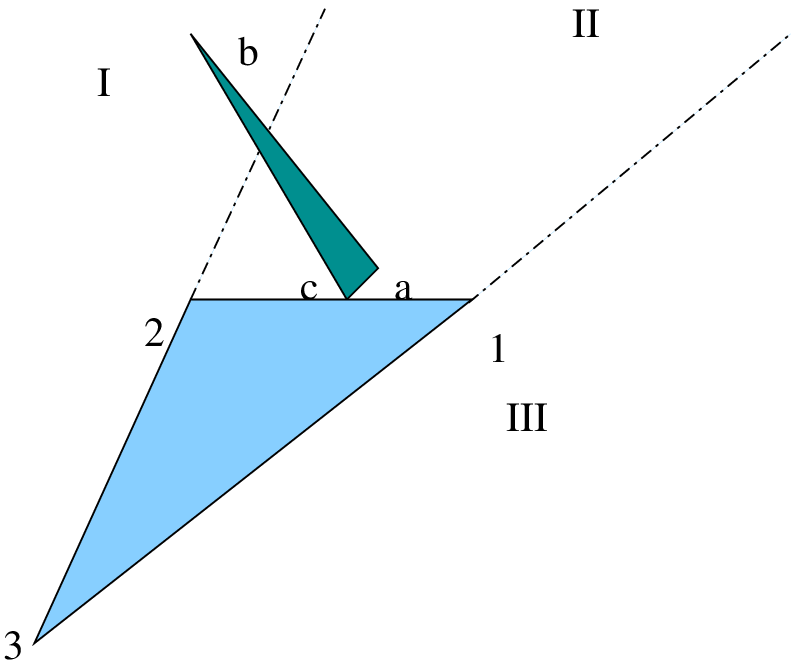} & \includegraphics[scale=0.5]{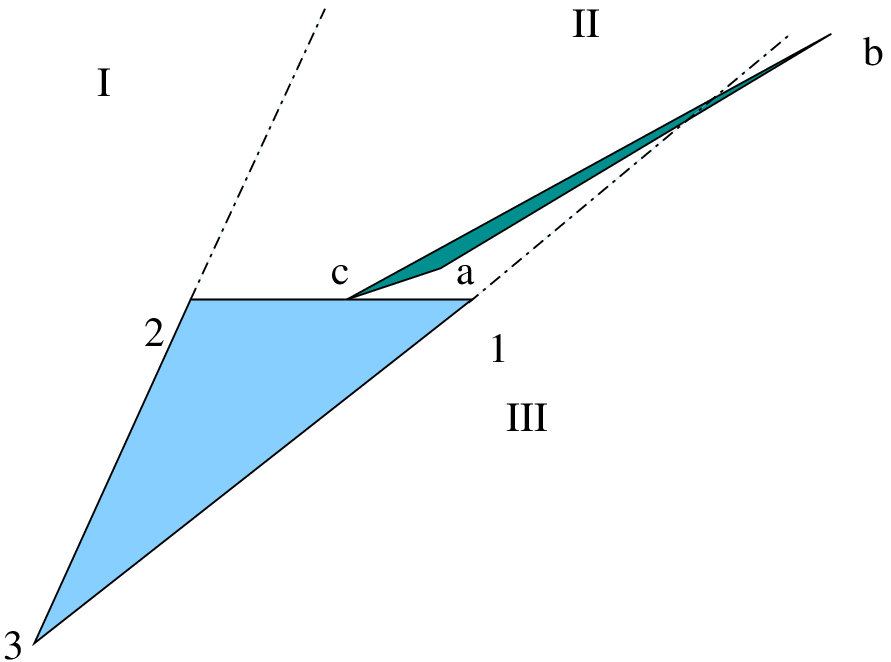} & \includegraphics[scale=0.5]{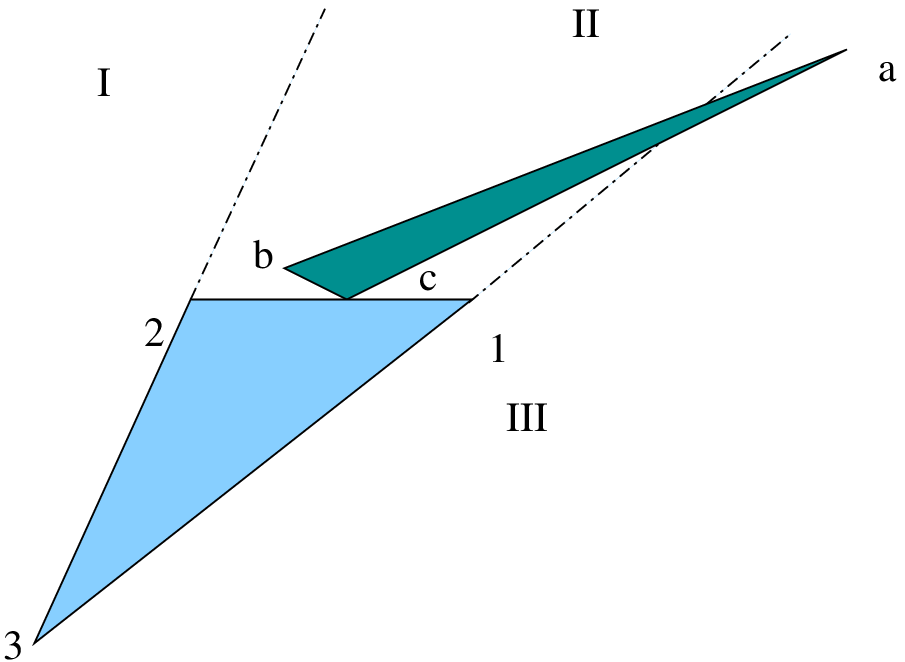} \\
{\bf (d)} & {\bf (e)} & {\bf (f)} \\
 & \includegraphics[scale=0.5]{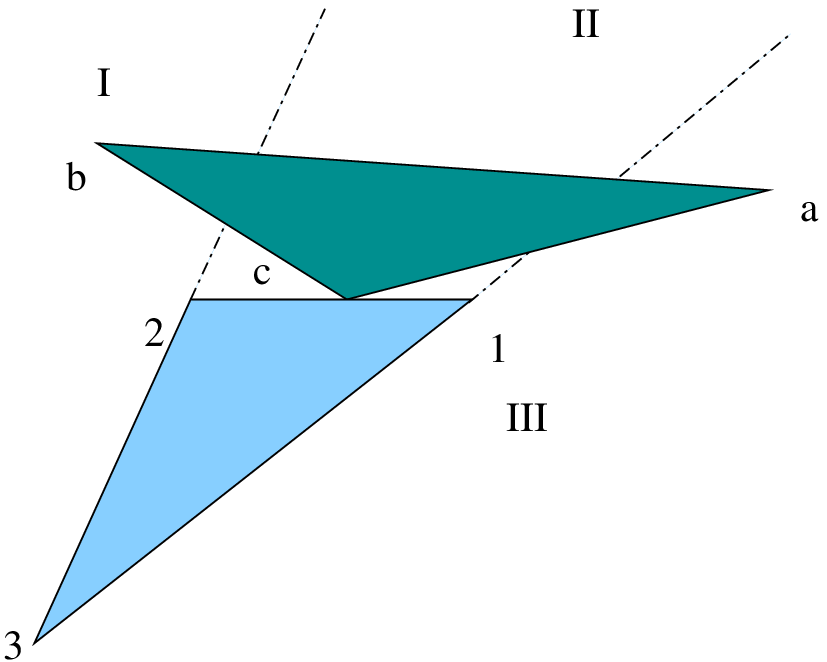} & \\
 & {\bf (g)} & \\
\end{tabular}
\caption{\small\sf Triangles touch at vertex and side. {\bf (a)} $\vec{ab}$ to region II; {\bf (b)} $\vec{ab}$ in region I; {\bf (c)} $\vec{ab}$ in region III; {\bf (d)} $b$ in region I, $a$ in region II; {\bf (e)} $b$ in region III, $a$ in region II; {\bf (f)} $b$ in region II, $a$ in region III; {\bf (g)} $b$ in region I, $a$ in region III.}
\label{fig:pointside}
\end{figure}

Next, we assume the triangles touch at two vertices, as shown in
Figure~\ref{fig:pointpoint}. There can be a feasible point formed by $\vec{23}$ and $\vec{ca}$,
and one by $\vec{31}$ and $\vec{bc}$. On the other hand, we can immediately eliminate the pairs
 $\vec{23}$ and $\vec{bc}$, and $\vec{31}$ and $\vec{ca}$. If $\vec{ab}$ is in the left half
plane of $\vec{12}$ (Figure~\ref{fig:pointpoint}(a)), the 
latter has no feasible points. Thus, there can be at most four. In fact,
$\vec{ab}$ can have at most one feasible point, with either $\vec{31}$ or  $\vec{23}$, but not
both, so there are at most 3 feasible points.

Otherwise, either point $a$ or point $b$ is to the right of $\vec{12}$ (Figure~\ref{fig:pointpoint}(b)),
all of triangle $123$ is to the left of $\vec{ab}$, and the symmetric case holds, with no
feasible points associated with $\vec{ab}$, and at most one additional feasible point formed by
$\vec{12}$ and either $\vec{ca}$ or $\vec{bc}$.

Finally, if the triangles do not touch at all and do not have a
pair of collinear sides, consider a pair of
closest points $p_0$ and $p_1$, one on each triangle, and the line segment
between the two points. If we imagine translating the points along
this line segment until the triangles touch, we have one of the
three situations: that of Theorem~\ref{thm:sideside}, 
Figure~\ref{fig:pointside} or Figure~\ref{fig:pointpoint}, 
and similar analysis apply, but with a possible reduction in
usable feasible points. For example, consider the configuration 
of Figure~\ref{fig:pointside}(c).


This fits the pattern of 
Figure~\ref{fig:pointpoint}(a). Thus, $\vec{12}$ has no feasible points,
and $\vec{ab}$ might potentially form a feasible point with $\vec{23}$
or $\vec{31}$, but not both. Now, unlike the touching case, 
we have four feasible points from sides $\vec{23}$, $\vec{31}$, $\vec{bc}$
and $\vec{ca}$. The problem is that, if a triangle is placed at
one of those points, the remainder become unusable. Thus, we end up
with at most three neighboring triangles. To complete the proof, 
we note that, in all of the cases, there can be at most two pairs of
touching triangles among the ones added.
\end{proof}

\begin{figure}[th]
\centering
\vspace{-1cm}
\begin{tabular}{ccc}
\includegraphics[scale=0.5]{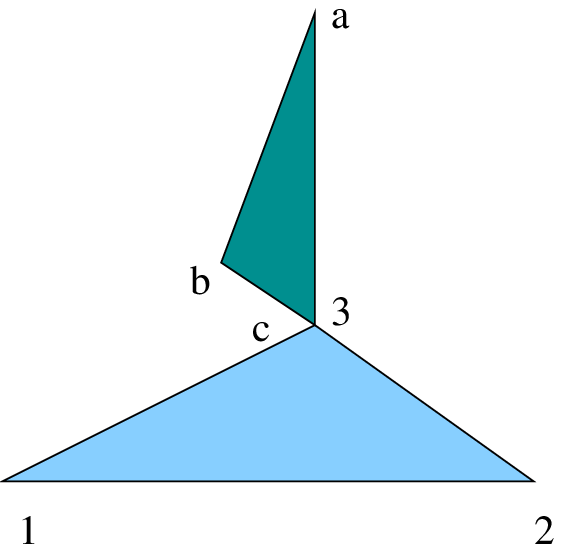}
& \includegraphics[scale=0.5]{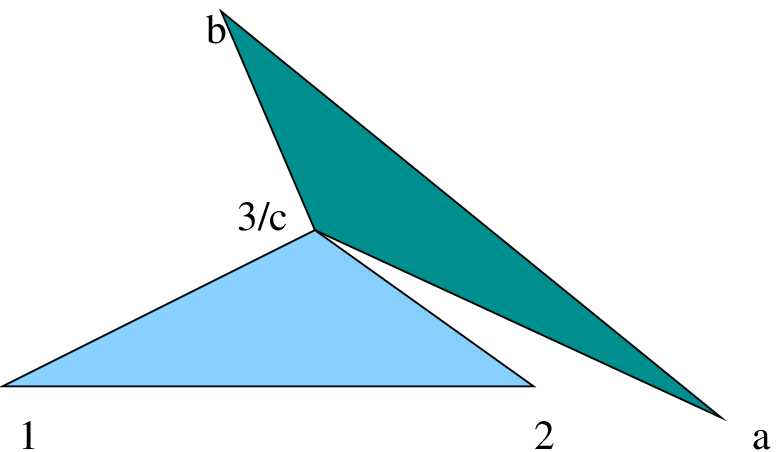} & \includegraphics[scale=0.5]{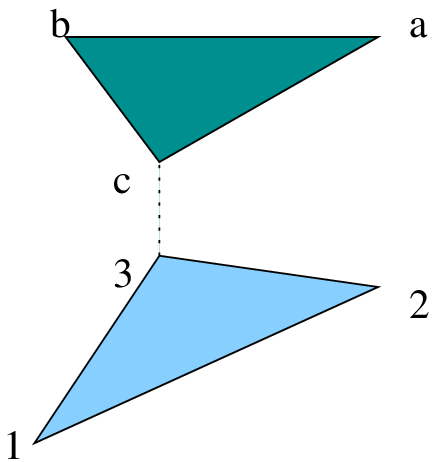}\\
{\bf (a)} & {\bf (b)} & {\bf (c)}
\end{tabular}
\caption{\small\sf Triangles touch at two vertices {\bf (a)} $\vec{ab}$ to the
  left of $\vec{12}$; {\bf (b)} $\vec{ab}$ crossing $\vec{12}$; {\bf
    (c)} shows one case of non-touching triangles.}
\label{fig:pointpoint}
\end{figure}

Figures~\ref{fig:pointside} and~\ref{fig:pointpoint} show that the
bounds of 3 or 4 derived in the proof are tight. Theorem~\ref{thm:nonside} shows that the top right graph in
Figure~\ref{fig-tough} is not $TTG$. Although these two theorems
provide simple tests for eliminating potential $TTG$s, we are fairly
certain that they do not provide sufficient conditions.

\section{Conclusion and Future Work}
We have considered the class of graphs that can be represented as
contact graphs of triangles, and shown that this includes outerplanar
graphs as well as subgraphs of square and hexagonal grids.
We derived some necessary conditions for such graphs, and was able to
present a complete characterization of the special subclass of biconnected
triangulation graphs.
A complete characterization of $TTG$, as well as contact graphs of
4-gons and 5-gons, remains open.
%
%


\bibliographystyle{abbrv}
{
\begin{small}
\vspace{-.3cm}\bibliography{stephen}
\end{small}
}
\newpage

\section*{Appendix: Non-$TTG$ Planar Graphs}

Here we briefly illustrate that the property ``representable as $TTG$''
is not closed under homeomorphisms or minors. Specifically, the graphs
in Figure~\ref{fig-tough} cannot be represented as $TTG$s, but
subdividing one edge from each of them makes them representable as $TTG$s. 

\begin{figure}[th]
\begin{center}
\includegraphics[width=13cm]{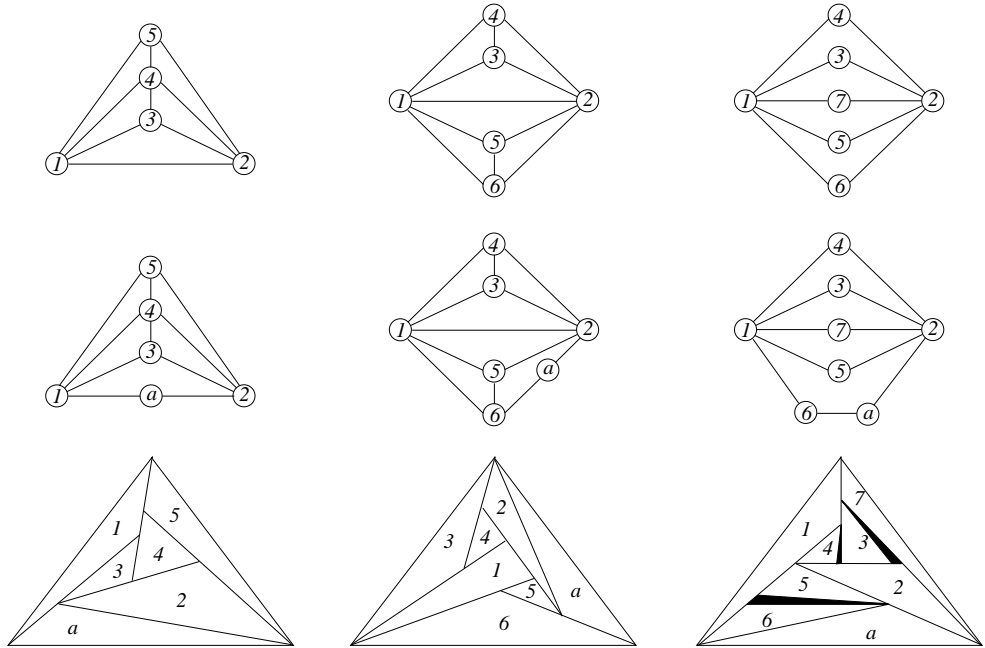}
\caption{\small\sf The graphs in the first row (on 5, 6, 7 vertices) do not have touching triangle graph representations. However, subdividing one edge from each, as in the second row, results in graphs that have $TTG$ representations. These representations are shown in the third row.
\label{fig-tough}}
\end{center}
\end{figure}

\end{document}